\documentclass[preprint]{imsart}
\RequirePackage[OT1]{fontenc}
\RequirePackage{amsthm,amsmath}
\RequirePackage[numbers]{natbib}
\RequirePackage[colorlinks,citecolor=blue,urlcolor=blue]{hyperref}

\issue{}
\firstpage{}
\lastpage{}

\startlocaldefs
\usepackage[tikz]{bclogo}
\usepackage{algorithm2e}

\usepackage{amssymb, amsmath, natbib, amsthm, graphicx, euscript, mathrsfs }
\def\sec#1{\underline{\textbf{#1}}}
\def\ttt#1{\textbf{#1}}
\def\R{I\!\!R}
\def\I{I\!\!I}
\renewcommand{\u}{u^{\circ}}

\newcommand{\A}{\mathscr{A}}

\newcommand{\eqd}{\buildrel d \over}

\newcommand{\PP}{{\mathbb P}}

\newcommand{\N}{{\mathbb N}}

\newcommand{\T}{\mathcal{T}}
\newcommand{\E}{{\mathbb E}}

\newcommand{\ii}{\mathrm{i}}
\newcommand{\eps}{\varepsilon}
\newcommand{\Ree}{\operatorname{Re}}
\newcommand{\Imm}{\operatorname{Im}}
\newcommand{\F}{\mathcal{F}}
\theoremstyle{plain}
\newtheorem{thm}{Theorem}[section]
\newtheorem{lem}[thm]{Lemma}

\newtheorem{rem}[thm]{Remark}
\newtheorem{prop}[thm]{Proposition}
\newcommand{\argmin}{\operatornamewithlimits{arg\,min}}

\newcommand{\B}{\widetilde{B}}
\newcommand{\W}{\mathcal{W}}
\newcommand{\C}{{\mathbb C}}
\newcommand{\vsp}{\vspace{0.5cm}}
\newcommand{\K}{\mathcal{K}}
\newcommand\erf{\operatorname{erf}}
\newcommand{\balpha}{\bar{\alpha}}
\newcommand{\p}{\varphi}

\begin{document}
\begin{frontmatter}
\title{Statistical inference for exponential functionals of L\'evy processes\thanksref{T1}}
\runtitle{Statistical inference for exponential functionals}

\begin{aug}
\author{\fnms{Denis} \snm{Belomestny}\ead[label=e1]{denis.belomestny@uni-due.de}}
\address{University of Duisburg-Essen\\
Thea-Leymann-Str. 9, 45127 Essen,  Germany\\
\printead{e1}}

\author{\fnms{Vladimir} \snm{Panov}\ead[label=e2]{vpanov@hse.ru}}
\address{National research university Higher School of Economics \\ Shabolovka, 26, Moscow,  119049 Russia.\\
\printead{e2}}

\thankstext{T1}{This research was partially supported by the Deutsche
Forschungsgemeinschaft through the SFB 823 ``Statistical modelling of nonlinear dynamic processes'' and  by
Laboratory for Structural Methods of Data Analysis in Predictive Modeling, MIPT, RF government grant, ag. 11.G34.31.0073.}
\runauthor{D.Belomestny and V.Panov}

\end{aug}

\begin{abstract}
In this paper, we consider the exponential functional \(A_{\infty}=\int_0^\infty e^{-\xi_s}ds\) of a L{\'e}vy process \(\xi_s\) and aim to estimate the characteristics of \(\xi_{s}\) from the distribution of \(A_{\infty}\). 
We present a new approach, which allows to statistically infer on the L{\'e}vy triplet of \(\xi_{t}\), and  study the theoretical properties of the proposed estimators. The suggested  algorithms are illustrated with numerical simulations. 
\end{abstract}

\begin{keyword}
\kwd{L{\'e}vy process}
\kwd{exponential functional}
\kwd{generalized Ornstein-Uhlenbeck process}
\end{keyword}
\tableofcontents
\end{frontmatter}

\section{Introduction}
For a L{\'e}vy process \(\xi=\left(\xi_{t}\right)_{t \geq 0}\), the exponential functional of \(\xi\) is defined by 
\begin{eqnarray*}
	A_{t} = \int_{0}^{t} e^{-\xi_{s}} \; ds,
\end{eqnarray*}
where \(t \in (0,\infty)\).
The main object of this research is the terminal value \begin{eqnarray}
\label{Ainfty}
A_{\infty}:=\lim_{t\to\infty} A_{t}= \int_{0}^{\infty} e^{-\xi_{s}} \; ds,
\end{eqnarray}  
which often (and everywhere in this paper) is called also an exponential functional of \(\xi\). 
The integral \(A_{\infty}\) naturally arises in a wide variety of financial applications as an invariant distribution of the process
\begin{eqnarray}
\label{GOU}
V_{t} = e^{-\xi_{t}} \left(  
	V_{0} + \int_{0}^{t} e^{\xi_{s-}} d s
\right),
\end{eqnarray}
see Carmona, Petit, Yor \cite{CPY2}. For instance, the process \eqref{GOU}  determines the volatility process in the COGARCH (COntinious Generalized AutoRegressive Conditionally Heteroscedastic) model  introduced by Kl{\"u}ppelberg et al. \cite{cogarch}. Note that \(V_{t}\) is in fact a partial case of the generalized Ornstein-Uhlenbeck (GOU) process. A comprehensive study of the GOU model is given in the dissertation by Behme \cite{Behme}.

 \(A_{\infty}\) appears in finance also  in other contexts, for instance, in pricing  of Asian options, see the monograph by Yor \cite{Yor} and the references given by Carmona, Petit, Yor \cite{CPY2}.   As for other fields of applications, \(A_{\infty}\) plays a crucial role in studying the carousel systems (see Litvak and Adan \cite{LitvakAdan}, Litvak and Zwet \cite{LitvakZwet}),  self-similar fragmentations (see Bertoin and Yor \cite{BertoinYor}), and information transmission problems (especially TCP/IP protocol, see Guillemin, Robert and Zwart \cite{Guill}).   For the detailed discussion of the physical interpretations, we refer to Comtet, Monthus and Yor \cite{CMY} and the dissertation by  Monthus \cite{Monthus}.

Denote the  L{\'e}vy triplet of the process \(\xi_{t}\) by \((c, \sigma, \nu)\), i.e.,
\begin{eqnarray}
\label{xi}
\xi_{t} =
	c t + \sigma W_{t} + \T_{t},
\end{eqnarray}
where \(\T_{t}\) is a pure jump process with L{\'e}vy measure \(\nu\).    The finiteness condition  stands that the integral \(A_{\infty}\) is finite if and only if \(\xi_{t} \to +\infty\) as \(t \to +\infty\), see  Maulik and Zwart \cite{MZ} for the proof and Erickson and Maller \cite{ErMaller} for some extensions of this result. Therefore,  the integral \(A_{\infty}\) is finite if the process \(\xi_{t}\) is  any non-degenerated subordinator, i.e., any  non-decreasing L{\'e}vy process, or, equivalently, any non-negative L{\'e}vy process. Nevertheless, the finiteness condition is fulfilled for other processes also, e.g., for \(\xi_{t} = -N_{t} + 2 \lambda t\), where \(N_{t}\) is a Poisson process with intensity \(\lambda\).

In this paper, we mainly focus on the case when  \(\xi\) is a subordinator with finite L{\'e}vy measure. In terms of the L{\'e}vy triplet, this means that \(c>0\), \(\sigma=0\), \(\nu(\R_{-})=0\) and moreover \(a:=\nu(\R_{+})<\infty\).  Suppose that the process \eqref{GOU} is observed in the time points \(0=t_{0} < t_{1} < .. .< t_{n}\). Taking into account that the process \(V_{t}\) is a Markov process, and assuming that \(V_0\) has an invariant distribution determined by \(A_{\infty}\), we conclude that \(V_{t_{0}},  ... , V_{t_{n}}\)  have also the distribution of  \(A_{\infty}\). 
 The main goal of this research is to  statistically infer on the L{\'e}vy triplet of \(\xi\) from the observations  \(V_{t_{0}},  ... , V_{t_{n}}\).   More precisely, we will pursue the following two aims: (1) to estimate the drift term \(c\)  and the parameter \(a\); (2) to estimate the L{\'e}vy measure \(\nu\). 
 
To the best of our knowledge, the statistical inference for exponential functionals of L{\'e}vy processes has not been previously considered in  the literature. However, some distributional properties of the exponential functionals are well-known, e.g., the integro - differential equation by Carmona, Petit, Yor \cite{CPY}. For the overview of  theoretical results, we refer to the survey by Bertoin and Yor \cite{BertoinYor}. One distribution property, the recursive formula for the moments of \(A_{\infty}\), gives rise to the approach presented in our paper. 
This result stands that   
\begin{eqnarray}
\label{momenty}
	\E \left[
		A_{\infty}^{s-1}
	\right]
=
\frac{
	\psi(s)
}{
	s
}\;
\E \left[
		A_{\infty}^{s}
	\right]
,
\end{eqnarray}
where \(\psi(s)\) is a Laplace exponent of the process \(\xi\), i.e., 
\(
	\psi(s) := - \log \E \left[
		e^{ - s \xi_{1}}
	\right],
\)
and complex \(s\) is taken from the area
 \begin{eqnarray}
 \label{Ups}
 \Upsilon:= \Bigl\{s \in \C: \;  0<\Ree(s)  < \theta \Bigr\} \quad \mbox{with}\quad  \theta:=\sup\left\{z \geq 0: \E[ e^{-z \xi_{1}}] \leq 1\right\}.
 \end{eqnarray}
The recursive  formula \eqref{momenty} firstly appear  for real \(s\) in the paper by Maulik and Zwart \cite{MZ}. The complete proof for complex \(s\) was given recently  by Kuznetsov, Pardo and Savov \cite{Kuznets}. If \(\xi_{t}\) is a subordinator, what is the case under our setup, the parameter \(\theta\) is equal to infinity.

The idea of the procedure for solving the first task (estimation of \(a\) and \(c\)) is to infer on the parameters of the process \(\xi\)  from its Laplace exponent.  First, making use of \eqref{momenty}, we estimate the Laplace exponent \(\psi(s)\) at the points \(s=u+\ii v \in \Upsilon\), where \(u\) is fixed and \(v\) varies on the equidistant grid between \(\eps V_{n}\) and \(V_{n}\) (with \(\eps>0\) and \(V_{n} \to \infty\) as \(n \to \infty\)) Afterwards, we take into account that 
\begin{eqnarray}
	\label{estproc1}
	\psi(u+\ii v) &=& a+ c \left( u+\ii v \right) -  \F_{\bar\nu}(-v) , \qquad u,v \in \R,
\end{eqnarray}
where  \(\bar\nu(dx):= e^{-u x} \nu(dx)\), and \(\F_{\bar\nu} (v)\) stands for the Fourier transform of the measure \(\bar\nu\), i.e., \( \F_{\bar\nu} (v) := \int_{\R_{+}} e^{\ii v x} \bar\nu(dx).\) It is worth mentioning that \(\F_{\bar\nu} (v) \to 0\) as \(v \to \infty\), and therefore taking the real and imaginary parts of the left and right hand sides of \eqref{estproc1}, we are able to consequently estimate the parameters \(c\) and \(a\).

With no doubt, the second aim (complete recovering of the L{\'e}vy measure) is the most challenging task. Since the estimates of the parameters \(c\) and \(a\) are already obtained, we can estimate by \eqref{estproc1} the Fourier transform \(\F_{\bar\nu} (v)\) for \(v\) taken from the equidistant grid \([-V_{n}, V_{n}]\).   The last step of this procedure, estimation of the L{\'e}vy measure \(\nu\),  is based on the inverse Fourier transform formula, and  reveals the main reason for using the complex numbers in our approach. In fact, one can estimate the function \(\psi(\cdot)\) in real points and then estimate the Laplace transform of the measure  \(\nu\) by the regression arguments. In this case, estimation of \(\nu\) demands the inverse Laplace transform, which is given by Bromwich integral and therefore is in fact much more involved in comparison with the inverse Fourier transform. 

The paper is organized as follows. In the next section, we introduce the assumptions on this model and give some examples. We formulate the algorithms for estimation the Laplace exponent \(\psi(s)\) (Section 3.1),  the parameters \(a\) and \(c\) (Section 3.2), and the L{\'e}vy measure \(\nu\) (Section 3.3). 
Next, we provide some numerical examples in Section 4 and analyze the convergence rates of the proposed algorithms  in Section 5. Appendix contains some related results and additional proofs.

\section{Assumptions on the model}
\subsection{Subordinators}
\label{subo}
 In this article, we restrict our attention to the case when the following set of assumptions is fulfilled:
 \begin{description}
    \item[(A1)] \qquad\qquad \qquad
    $$
    \left\{
    \begin{aligned}
    c\geq 0, &\qquad \sigma=0, \\
    \nu(\R_{-}) = 0, &\qquad a:=  \nu(\R_{+}) < \infty.\\
    \end{aligned}
    \right.
    $$ 
    \end{description}
   This set in particularly yields that the process \(\xi\) has finite variation, i.e.,
   \begin{eqnarray}
\label{nu}
	\int_{\R_{+}} \left(
		x \wedge 1
	\right) \nu(dx) < \infty,
\end{eqnarray}
and therefore \(\xi\) is a non-decreasing L{\'e}vy processes, i.e., a subordinator. The detailed discussion of the subordination theory as well as various examples of such processes (Gamma, Poisson,  tempered stable, inverse Gaussian, Meixner processes, etc.), are given in  \cite{BNS}, \cite{Bertoin}, \cite{ContTankov},  \cite{Sato}, \cite{Schoutens}.

Note that in the case of subordinators, the truncation function in the L{\'e}vy-Khinchine formula can be omitted, and therefore  the characteristic exponent of \(\xi\) is equal to 
\begin{eqnarray}
\label{phis}
	\psi_{e}(s) =  \log \E \left[
		e^{\ii s \xi_{1}}
	\right] 
	=
	 \ii c s  
	 + \int_{0}^{\infty} 
	\left(
		e^{\ii s x}  - 1
	\right) \nu(dx).
\end{eqnarray}      
Later on, we use  a Laplace exponent of \(\xi\), which is defined by \[\psi(s) := - \log \E \left[
		e^{-s \xi_{1}}
	\right] =  -\psi_{e} \left(\ii s \right),\] and under the assumption (A1)  is equal to
\begin{eqnarray}
\label{phis2}
	\psi(s) &=& 
	c s 
	+ \int_{0}^{\infty}
	\left(
		1 -  e^{- s x} 
	\right) \nu(dx) \\
\label{phis3}
&=& 
	c s 
	+ s \int_{0}^{\infty} 
	  e^{- s x} \nu(x, +\infty) dx.
\end{eqnarray}
Some examples  can be found in Section \ref{secsim}.  In the sequel, we use the fact that the function \(\psi(\cdot)\) is bounded from above on the set \(\Upsilon\) by 
\begin{eqnarray*}
	|\psi(s)| &\leq& c |s| 
	+ 
	\int_{0}^{\infty} 
	\left(
		1 + e^{-  \Ree(s) x} 
	\right) \nu(dx)  
	\leq c \sqrt{\theta^{2} + \Imm^{2}(s)} 
	+ 
	2 a,
\end{eqnarray*}
and hence the asymptotic behavior of the function \(\psi(s)\) is given by
\begin{eqnarray}
\label{upp}
\left|
	\psi(s)
\right| 
=
O\Bigl(\Imm (s)\Bigr), \qquad \Imm(s) \to +\infty.
\end{eqnarray}

\subsection{Further assumptions on $\nu$}
\label{further}
First, we assume  the following asymptotic behavior of the Mellin transform of the integral \(A_{\infty}\):
\begin{description}
    \item[(A2)]  \(\hspace{3cm}  \left| \E \left[ A_{\infty}^{u^{\circ}+ \ii v} \right] \right| \asymp \exp\{ - \gamma |v|\}, \qquad \mbox{as }\;  |v|  \to \infty\)
\end{description}
with some \(\gamma>0\) and \(u^{\circ}>0\). 

Second, we introduce an assumption on 
the measure \(\bar\nu(dx) := e^{-u^{\circ}x} \nu(dx)\) : 
\begin{description}
    \item[(A3)]  \(\hspace{4cm} \left\| \bar\nu^{(r)} \right\|_{L^{\infty}(\R)}\leq C \) 
\end{description}
for some positive \(r\) and \(C\).

It is worth noting that there is an (indirect) relation between the assumptions (A2) and (A3). In fact, joint consideration of \eqref{momenty} and \eqref{estproc1} yields that 
\begin{eqnarray*}
	\F_{\bar\nu} (-v) &=& a+ c \left( \u+\ii v \right) -  (\u+\ii v) \frac{m(\u+ \ii v) }{m((\u+1)+ \ii v)}, \qquad u,v \in \R,
\end{eqnarray*}
where \(m(s):=\E \left[A_{\infty}^{s-1}\right]\) is the Mellin transform. 

\sec{Example 1.} For instance, the set of assumptions (A1) - (A3)  fulfills for  the class of L{\'e}vy processes with \(c=0\) and L{\'e}vy  density in the form
\[
\nu(x) = I_{x>0} \sum_{j=1}^{M} \sum_{k=1}^{{m_{j}}} \alpha_{jk} x^{k-1} e^{-\rho_{j}x}
\]
with \(M, m_{j} \in \N\), \(\rho_{j}>0\), \(\alpha_{jk}>0\).  In fact, assumption (A1) and (A3) obviously hold; assumption (A2) is checked in  \cite{Kuznets2} for any positive \(u^{\circ}\)  (p. 658, the proof of Theorem 1). 

\sec{Example 2.} Next, we provide an example of the L{\'e}vy process which doesn't possess the property (A2). Consider a subordinator \(\T\) with drift \(c>0\) and the L{\'e}vy density 
\begin{eqnarray*}
\nu(x)= ab \exp\{-bx\}\: I\{x>0\}, \quad a, b >0,
\end{eqnarray*}
which we describe in details in Section~\ref{secsim}. The exponential functional of this process has a density
\begin{eqnarray*}
	k(x) = C_{1} x^{b} (1-c x)^{(a/c) -1 } \: I\{0< x<1/c\}
\end{eqnarray*}
with some \(C_{1}>0\),  see \cite{CPY}.  In other words, the exponential functional has a distribution \(B(\alpha+1,\beta+1)/c\), where \(B\) stands for Beta distribution with parameters \(\alpha=b\) and \(\beta=a/c-1\). The Mellin transform of the function \(k(x)\) in the half -plane \(\Ree(s)>-\alpha\) is given by 
\begin{eqnarray*}
	m(s) = C_{2} (\alpha, \beta) \:c^{s}\: \frac{ \Gamma(\alpha+s)}{\Gamma(\alpha+\beta+1 + s)},
\end{eqnarray*}
where \(C_{2}(\alpha, \beta)>0\), see Table~1 from \cite{Eps}. Taking into account the following asymptotical behavior of the Gamma function 
\begin{eqnarray*}
 	\left| \Gamma(u+\ii v) \right|= \exp\left\{
		-\frac{\pi}{2} v +\left(u - \frac{1}{2} \right) \ln v + O(1) 
	\right\}, \qquad v \to \infty, 
\end{eqnarray*}
see (9) from \cite{Kuznets2}, we conclude that the exponential functional of the process \(\T\) has a polynomial decay of the Mellin transform. More precisely, \( |m(s)|  \asymp C_{3} v^{-a/c -1}\)  with some \(C_{3}>0\).


\section{Estimation of the L{\'e}vy triplet}
In the sequel, suppose that the process \eqref{GOU} is observed at the time points \(0=t_{0} < t_{1} < .. .< t_{n}\). Assuming that \(V_{0}\) has a stationary  invariant distribution, we get that the values \(A_{\infty, k}:=V_{t_{k}}, \; k=1..n\) have the distribution of the integral \(A_{\infty}\).
\subsection{Estimation of the Laplace exponent}
\label{Laplace} 
The first step of the estimation procedure is to construct the estimate of the function \(\psi(s)\) in the complex points \(s=u+\ii v\), where \(u\) is fixed and \(v\) varies. The reason for such choice of \(s\)  is clear from the further steps of the algorithm.   

The estimator of \(\psi(s)\) is based on  a recursive formula for the \(s-\)th (complex) moment of \(A_{\infty}\):
\begin{eqnarray}
\label{moment} 
	\E \left[
		A_{\infty}^{s-1}
	\right]
=
\frac{
	\psi(s)
}{
	s
}\;
\E \left[ 
		A_{\infty}^{s}
	\right],
\end{eqnarray}
In \cite{CPY}, this formula is proved  for real positive \(s\) such that \(\psi(s)>0\) and \(\E \left[
		A_{\infty}^{s}
	\right] < \infty\). The case of infinite mathematical expectations is carefully discussed in \cite{MZ}.  
	
	The case of complex \(s\) is considered in \cite{Kuznets}, where one can find also some generalizations of the formula \eqref{moment} for  integrals with respect to the Brownian motion with drift. In particular, applying Theorem 2 from \cite{Kuznets}, we get that  \eqref{moment} holds for any \(s \in \Upsilon\). In the situation when  \(\xi_{t}\) is a subordinator, the set \(\Upsilon\) coincides with the positive half-plane (equivalently, the parameter \(\theta\)  is equal to infinity), because it follows from \eqref{phis3} that
\[
	\E\left[
		e^{-s \xi_{1}} 
	\right]
	= - \psi (-s) = - cs - s \int_{\R_{+}} e^{sx} \nu (x, +\infty) dx < 0, \quad s>0.
\]

Motivated by \eqref{moment}, we now present the the first two steps in the estimation procedure. Let the values \(\alpha_{1}, ..., \alpha_{M}\) compose the equidistant grid with the step \(\Delta>0\) on the set \([\eps, 1]\), where  \(\eps>0\)  and the sequence \(V_{n}\) tends to infinity.
First, we 
estimate \(A_{\infty}^{s}\)  for \(s=u+\ii \alpha_{m} V_{n}, \; m=1..M\) and \(s=(u-1)+\ii v_{m}, \; m=1..M\)  where \(u:= u^{\circ} \in \left(-1,\theta\right)\) satisfies the assumptions (A2) and (A3).    Theoretical studies (see Section~\ref{theory}) show that the optimal choice is \(V_{n}= \kappa \log(n)\) with \(\kappa<1/(2\gamma)\), provided that the Assumptions (A1)-(A3) hold.  The estimator of \(A_{\infty}^{s}\) is defined by
\begin{eqnarray}
\label{step1}
	\widehat	\E_{n} \left[
		A_{\infty}^{s}
	\right]
	=
	\frac{1}{n}
	\sum_{k=1}^{n} A_{\infty, k}^{s}.
\end{eqnarray}
Next, we define an estimate of \(\psi(\cdot)\) at the points \(\left( u+ \ii \alpha_{m} V_{n} \right)\)  by
\begin{eqnarray}
\label{step2}
	\hat{\psi}_{n}(u+ \ii \alpha_{m} V_{n})
	=
	\left(
		u + \ii \alpha_{m} V_{n}
	\right)\;
\frac{
	\widehat\E _{n}\left[
		A_{\infty}^{(u-1) + \ii \alpha_{m} V_{n}}
	\right]
}{
	\widehat\E_{n} \left[
		A_{\infty}^{u + \ii \alpha_{m} V_{n}}
	\right]
}, \qquad m=1..M.
\end{eqnarray}
The performance of this estimator is later shown in Section~\ref{secsim}, see in particularly  Figures~\ref{plot1} and \ref{fig3}. The quality of \(\hat\psi_{n} (\cdot) \)  is theoretically studied in Theorem~\ref{thmphi}, which stands that under the assumptions (A1) and (A2) and the following condition on \(V_{n}\)
\begin{eqnarray*}
     \Lambda_{n} := V_{n} \exp\left\{c V_{n}\right\} \sqrt{\log V_{n}} = o \left(\sqrt{\frac{n}{\log(n)}} \right), \; n \to \infty,
\end{eqnarray*} 
it holds for \(n\) large enough
\begin{eqnarray}
\label{hatphi}
\PP\Biggl\{
\sup_{v \in [\eps V_{n}, V_{n}]}
\left|
	\hat\psi_{n}(u+\ii v) - \psi(u + \ii v)
\right|
\leq \beta \: \Lambda_{n}\sqrt{\frac{\log(n)}{n}} 
\Biggr\}  > 1 - \alpha n^{-1-\delta},
\end{eqnarray}
with some positive \(\alpha\), \(\beta\) and \(\delta\).

 \subsection{Estimation of $a$ and $c$}
In Section~\ref{subo}, we present the representation \eqref{phis2} for the Laplace exponent of the process \(\xi\). Substituting now the complex argument \(z=u+\ii v\), we get
\begin{eqnarray}
	\nonumber
	\psi(u+\ii v) &=& c \left( u+\ii v \right) - \int_{\R_{+}} e^{-\ii v x}\bar\nu(dx)  + \int_{\R_{+}} \nu(dx) \\
	\label{estproc}
	&=&c \left( u+\ii v \right) -  \F_{\bar\nu}(-v) +a , \qquad u,v \in \R,
\end{eqnarray}
where \(a:=\int_{\R_{+}} \nu(dx)\) and  \(\bar\nu(dx):= e^{-u x} \nu(dx)\). The general idea of the  procedure described below is to estimate the Laplace exponent \(\psi(\cdot)\) at the points \(s=u+\ii v\), where \(u\) is fixed at \(v\) varies (see Section~\ref{Laplace}), and afterwards to use  \eqref{estproc} for consequent estimation of the parameters. 

Taking imaginary and real of both hand sides in \eqref{estproc}, we get 
\begin{eqnarray}
\label{Im}
\Imm \psi(u+\ii v) &=& c v - \Imm \F_{\bar\nu} (-v), \\
\label{Re}
\Ree \psi(u+\ii v) &=& c u  - \Ree \F_{\bar\nu} (-v) + a.
\end{eqnarray}
By Riemann - Lebesque lemma,  \(\F_{\bar\nu} (-v) \to  0\) as \(v \to +\infty\), see, e.g., \cite{Kawata}; note that the rates of this convergence are assumed in (A3). Therefore, looking at \eqref{Im}, we conclude that \( \Imm \psi(u+\ii v)\)  is a (asymptotically) linear in \(v\) function, and   the parameter \(c\) can be interpreted a slope parameter. Next, from \eqref{Re}, it follows that  \( \Ree \psi(u+\ii v)\)  tends to \(( c u +a )\) as \(v \to +\infty\). These observations lead to the following optimization problems
\begin{eqnarray}
\label{optim}
\tilde{c}_{n} &:=&
\argmin_{c} \int_{\R_{+}} w_{n}(v) 
	\Bigl(
		\Imm \hat\psi_{n} (u+\ii v) 
		- 
		c v 
	\Bigr)^{2} dv \\
	\label{optim2}
\tilde{a}_{n} &:=&	
	\argmin_{a}\int_{\R_{+}} w_{n}(v) 
	\Bigl(
		\Ree \hat\psi_{n} (u+\ii v) 
		- 
		\tilde{c}_{n} u 
		-
		a
	\Bigr)^{2} 
	dv,
\end{eqnarray}
where the weighting function is chosen in the form \( w_{n}(v) =w(v/V_{n})/V_{n}\) with an integrable non-negative  function \(w (\cdot)\) supported on \([\eps,1]\).  Under this choice of \(V_{n}\), we can rewrite \eqref{optim} as follows:
\begin{eqnarray*}
\tilde{c}_{n} &:=&
\argmin_{c} \int_{\eps}^{1} w(\alpha) 
	\Bigl(
		\Imm \hat\psi_{n} (u+\ii \alpha V_{n}) 
		- 
		c \alpha V_{n}
	\Bigr)^{2} dv \\
\tilde{a}_{n} &:=&	
	\argmin_{a}\int_{\eps}^{1} w(\alpha) 
	\Bigl(
		\Ree \hat\psi_{n} (u+\ii \alpha V_{n}) 
		- 
		\tilde{c}_{n} u 
		-
		a
	\Bigr)^{2} 
	dv,
\end{eqnarray*}

In practice, we first get the estimates of the Laplace exponent at the points \(s=u+\ii \alpha_{m} V_{n}\) (see above) and  define an estimate of the parameter \(c\) by 
\begin{eqnarray}
\label{hatcopt}
	\hat{c}_{n} &:=& \argmin_{c} \sum_{m=1}^{M}  w(\alpha_{m}) 
	\Bigl(
		\Imm \hat\psi_{n} (u+\ii \alpha_{m} V_{n}) 
		- 
		c \alpha_{m} V_{n}
	\Bigr)^{2}\\
\label{hatc}
	&=& 
	 \frac{
	\sum_{m=1}^{M} 
		  w(\alpha_{m})  \alpha_{m} \: \Imm \hat\psi_{n} (u+\ii \alpha_{m} V_{n}) 
}
{	
	V_{n} \: \cdot \: \sum_{m=1}^{M} 
		  w(\alpha_{m})  \alpha_{m}^{2} 
}.
\end{eqnarray}
Afterwards, we estimate the parameter \(a\) by 
\begin{eqnarray}
\label{hataopt}
	\hat{a}_{n} &:=&
	\argmin_{a} \sum_{m=1}^{M}   w(\alpha_{m})
	\Bigl(
		\Ree \hat\psi_{n} (u+\ii \alpha_{m} V_{n}) 
		- 
		\hat{c}_{n} u - a
	\Bigr)^{2}\\
\label{hata}
	&=&
	\frac{
		\sum_{m=1}^{M} w (\alpha_{m}) 	\Ree \hat\psi_{n}(u+\ii \alpha_{m} V_{n}) 
	}
	{
		\sum_{m=1}^{M}
		w (\alpha_{m})
	}
		-\hat{c}_{n} u.
\end{eqnarray}

We show empirical and theoretical properties of the estimators \(\hat{a}_{n}\) and \(\hat{c}_{n}\) below,  see Figure~\ref{plot2} and Theorem~\ref{thm3}. Similar to \eqref{hatphi}, we prove that under the choice \(V_{n}=\kappa \log(n)\) with \(\kappa<1/(2\gamma)\), it holds 
\begin{eqnarray*}
\PP\Biggl\{
\left|
	\tilde{c}_{n} - c 
\right|
\leq \zeta_{1} \log^{-(r+2)}(n)
\Biggr\}  &>& 1 - \alpha n^{-1-\delta}, \qquad 
\mbox{and} \\
\PP\Biggl\{
\left|
	\tilde{a}_{n} -a 
\right|
\leq \zeta_{2}\log^{-(r+1)}(n)
\Biggr\}  &>& 1 - \alpha n^{-1-\delta}
,
\end{eqnarray*}
with \(\zeta_{1}, \zeta_{2}>0\), and \(\alpha, \delta\) introduced above. Constants \(\gamma\) and \(r\) involved in this statement are comming from  assumptions (A2) and (A3) resp. Moreover, we prove Theorem~\ref{lower}, which stands that this rate for \(c\) is optimal one in the class \(\A\) of the  models satisfying the assumptions (A1) - (A3). More precisely, we show that 
\[
\varliminf_{n \to \infty} 
\inf_{\tilde{c}_{n}^{*}} \sup_{\A}  
	\PP\Biggl\{
\left|
	\tilde{c}_{n}^{*} - c 
\right|
\geq \zeta_{3} \log^{-(r+2)}(n)
\Biggr\}  > 0,
\]
where \(\zeta_{3}<\zeta_{1}\) is some positive constant, the supremum is taken over all models from \(\A\), and infimum - over all possible estimates of the parameter \(c\).

We summarize the steps discussed above in the following  algorithm.\vsp

\begin{bclogo}[couleur=blue!15, logo=\bccrayon]
{Algorithm 1: Estimation of \(a\) and \(c\)}
\begin{algorithm}[H]
\SetAlgoLined
\KwData{\(n\) observations \(A_{\infty,1}, ... ,A_{\infty, n}\) of the integral \(A_{\infty}=\int_{\R_{+}} \exp\{-\xi_{s}\} ds\),\\ where \(\xi=\left( \xi_{t} \right)_{t \geq 0}\) is a L{\'e}vy process with unknown L{\'e}vy triplet \(\left( c, 0, \nu \right)\).}
\vspace{0.3cm}
Take  \(V_{n}= \kappa \log(n)\) with \(\kappa<1/(2\gamma)\), fix \(\eps \in (0,1)\) and \(u > -1\). 

Take the values \(\alpha_{1}, ..., \alpha_{M}\) on the equidistant grid on the set \([\eps, 1]\) with a step \(\Delta\). 

Define a  function \(w (\cdot) \geq 0\) supported on \([\eps,1]\).  Denote \(v_{m,n}:= \alpha_{m} V_{n}\).

\begin{enumerate}
\item
Estimate \(A_{\infty}^{s}\)  for \(s=u_{j}+\ii v_{m,n}, \; m=1..M, \) where \(u_{1}=u\) and \(u_{2}=u-1\) 
\begin{eqnarray*}
\label{step1}
	\widehat	\E_{n} \left[
		A_{\infty}^{u_{j}+\ii v_{m,n}}
	\right]
	=
	\frac{1}{n}
	\sum_{k=1}^{n} A_{\infty, k}^{u_{j}+\ii v_{m,n}}, \qquad m=1..M, \quad j=1,2.
\end{eqnarray*}
\item Estimate \(\psi(u+\ii v_{m,n})\) by 
\begin{eqnarray*}
\label{moment2}
	\hat{\psi}_{n}(u+\ii v_{m,n})
	=
	\left(
		u+\ii v_{m,n}
	\right)
\;
\frac{
	\widehat\E_{n} \left[
		A_{\infty}^{\left(u-1\right)+\ii v_{m,n}}
	\right]
}{
	\widehat\E_{n} \left[
		A_{\infty}^{u+\ii v_{m,n}}
	\right]
},  \qquad m=1..M.
\end{eqnarray*}
\item[3.] Estimate \(c\) by the solution of the optimization problem \eqref{hatcopt}, 

which is explicitly given by 
\begin{eqnarray*}
\label{opt}
	\hat{c}_{n} :=  
		 \frac{
	\sum_{m=1}^{M} 
		  w(\alpha_{m})  \alpha_{m} \: \Imm \hat\psi_{n} (u+\ii v_{m,n}) 
}
{	
	V_{n} \: \cdot \: \sum_{m=1}^{M} 
		  w(\alpha_{m})  \alpha_{m}^{2} 
}.
\end{eqnarray*}
\item[4.] Estimate \(a\) by the solution of the optimization problem \eqref{hataopt}, 

which is explicitly given by 
\begin{eqnarray*}
\label{hata}
	\hat{a}_{n} := 
		\frac{
		\sum_{m=1}^{M} w (\alpha_{m}) 	\Ree \hat\psi_{n}(u+\ii v_{m,n}) 
	}
	{
		\sum_{m=1}^{M}
		w (\alpha_{m})
	}
		-\hat{c}_{n} u.
\end{eqnarray*}
\end{enumerate}
\end{algorithm} 
\end{bclogo}

\subsection{Recovering the L{\'e}vy measure $\nu$}
As the result of the algorithm described below, we obtain the estimates \(\hat{c}_{n}\) and \(\hat{a}_{n}\) of the parameters \(c\) and \(a\).  In this subsection, we present the  algorithm for estimation the L{\'e}vy measure \(\nu\). 

First, we take points \(s=u+\ii \alpha_{m} V_{n}\), where \(\alpha_m,\) \:\(m=1..M\), belong to the interval \([-1, 1]\). The construction of the estimates \(\hat\psi_{n}(s)\) remains the same as in Section~\ref{Laplace}.  Next,  looking at \eqref{estproc}, we define an estimate \(\F_{\bar\nu}(-v)\) for \(v=\alpha_{m} V_{n}, \: m=1..M\) by
\begin{eqnarray}
\label{step5}
	 \hat\F_{\bar\nu}(-v)  =  - \hat\psi_{n}(u+\ii v) + \hat{c}_{n} (u+\ii v)   +  \hat{a}_{n}.
\end{eqnarray}
The last step is to recover the measure \(\nu\) from the estimator of the Fourier transform of the measure \(\bar\nu\). Motivated by the inverse Fourier transform formula, we propose the following nonparametric estimator of the measure \(\nu\):
\begin{eqnarray}
\label{step6}
	\tilde{\nu}(x)  &=&  \frac{1}{2 \pi} e^{u x} \int_{\R} e^{\ii v x } \hat\F_{\bar\nu}(-v) \K (v h_{n})  dv,
\end{eqnarray}
where \(\K\) is a regularizing kernel supported on \([-1,1]\) and \(h_{n}\) is a sequence of bandwidths which tends to \(0\) as \(n \to \infty\). The formal description of the algorithm is given below.


\begin{bclogo}[couleur=blue!15, logo=\bccrayon]
{Algorithm 2: Estimation of \(\nu\)}
\begin{algorithm}[H]
\SetAlgoLined
\KwData{\(n\) observations \(A_{\infty,1}, ... ,A_{\infty, n}\) of the integral \(A_{\infty}=\int_{\R_{+}} \exp\{-\xi_{s}\} ds\),\\ where \(\xi=\left( \xi_{t} \right)_{t \geq 0}\) is a L{\'e}vy process with unknown L{\'e}vy triplet \(\left( c, 0, \nu \right)\).
The estimates \(\hat{a}_{n}\) and \(\hat{c}_{n}\) are described in Algorithm 1.} 

\vspace{0.3cm}
Take the values \(\alpha_{1}, ..., \alpha_{M}\) on the equidistant grid on the set \([-1, 1]\) with a step \(\Delta\). 

Denote \(v_{m,n}:= \alpha_{m} V_{n}\).

\vspace{0.3cm}
Define  a regularizing kernel \(\K\)  supported on \([-1,1]\), and a (large enough) number \(h\).

\vspace{0.3cm}
\begin{enumerate}
\item[1-2] The first two steps coincide with given in Algorithm 1.
\item[3.]  Estimate  \(\F_{\bar\nu}(-v_{m,n}) \) for \(\bar{\nu} (dx) =e^{-ux} \nu (dx)\)  by
\begin{eqnarray*}
	 \hat\F_{\bar\nu}(-v_{m,n})  =  - \hat\psi_{n}(u+\ii v_{m,n}) + \hat{c}_{n} \cdot (u+\ii v_{m,n})  +  \hat{a}_{n}, \, m=1..M.
\end{eqnarray*}
\item[4.]
Estimate \(\nu\) by
\begin{eqnarray*}
	\hat{\nu}(x)  &=&  e^{u x}  \frac{\Delta}{2 \pi }  \sum_{m=1}^{M} e^{\ii v_{m,n} x } \hat\F_{\bar\nu}(-v_{m,n}) \K (v_{m,n} h).
\end{eqnarray*}

\end{enumerate}
\end{algorithm}
\end{bclogo}
Some theoretical and practical aspects of this algorithm are discussed in Sections \ref{secsim} and \ref{theory}.

\begin{rem}
It is a worth mentioning that the estimation algorithms 1 and 2 can be applied to more general situation when the process \(\T_{t}\) is a difference between two subordinators, i.e., \(\T_{t} = \T^{+}_{t} + \T^{-}_{t}\), where  \(\T^{+}\) and \(\T^{-}\) are the processes of finite variation with L{\'e}vy measures \(\nu^{+}\) and \(\nu^{-}\) concentrated on \(\R_{+}\) and \(\R_{-}\) resp. In fact, in this case, the formula \eqref{estproc} still holds with \[\nu(dx) = \I \{x>0\} \nu^{+} (dx) + \I \{x<0\} \nu^{-}(dx).\] 
Therefore, the consequent estimation of \(c\), \(a\) and the Fourier transform of the measure \(e^{-u x} \nu(dx)\), as well as the estimation of \(\nu\) are still possible. 

Theoretical results under  the assumptions (A2) and (A3)  remain the same. The example from Section~\ref{further} can be naturally extended to 
\[
\nu(x) = I_{x>0} \sum_{j=1}^{M} \sum_{k=1}^{{m_{j}}} \alpha_{jk} x^{k-1} e^{-\rho_{j}x} + I_{x<0} \sum_{j=1}^{\tilde{M}} \sum_{k=1}^{{\tilde{m}_{j}}} \tilde{\alpha}_{jk} x^{k-1} e^{-\tilde{\rho}_{j}x}
\]
with \(M, \tilde{M}, m_{j}, \tilde{m}_{j} \in \N\), \(\rho_{j}, \tilde\rho_{j}>0\), \(\alpha_{jk}, \tilde\alpha_{}{jk}>0\). Note that Assumption (A2) is already checked in Theorem~1 from \cite{Kuznets2}.
\end{rem}

\section{Simulation study} 
\label{secsim}

\sec{Example 1.} 
Consider the subordinator \(\T_{t}\) with the L{\'e}vy density 
\begin{eqnarray}
\label{nu}
\nu(x)= ab \exp\{-bx\}\: I\{x>0\}, \quad a, b >0.
\end{eqnarray}
For this subordinator, the integral \(A_{\infty}\) is finite for any \(\sigma\), see \cite{CPY}.
The Laplace exponent of \(\xi\) is given by
\begin{eqnarray}
\label{phiss2}
\psi(z) = z \left(
		c - \frac{1}{2} \sigma^{2} z + \frac{a}{b+z}
	\right).
\end{eqnarray}
As for the distribution properties of \(A_{\infty}\),  the density function of \(A_{\infty}\) satisfies the following differential equation 
\begin{multline}
 -\frac{\sigma^{2}}{2} x^{2} k''(x) + 
 \left[
 	\left(
		\frac{\sigma^{2}}{2} (3-b) +c
	\right)
	x
	-1
\right] k'(x) \\
+
\left[
	\left( 
		1-b
	\right)
	\left(
	 	\frac{\sigma^{2}}{2} +c
	\right)
	-a+\frac{b}{x}
 \right] k(x) = 0,
 \label{eq}
\end{multline}
see \cite{CPY}.  Some typical situations are given below:
\begin{enumerate}
\item  In the case \(c=0, \: \sigma=0\) (pure jump  process), this equation has a solution
\begin{eqnarray}
\label{kx1}
	k_{1}(x) = C x^{b} e^{-a x} \: I\{x>0\},
\end{eqnarray}
and therefore \(A_{\infty} \eqd  G(b+1, a)\), where \(G(\alpha, \beta)\) is  a Gamma distribution with shape parameter \(\alpha\)  and rate \(\beta\). 
\item If \(c>0, \: \sigma=0\)  (pure jump process with drift), then 
\begin{eqnarray}
\label{kx2}
	k_{2}(x) = C x^{b} (1-c x)^{(a/c) -1 } \: I\{0< x<1/c\}.
\end{eqnarray}
In this situation \(A_{\infty} \eqd  B(b+1, a/c) / c\), where \(B(\alpha, \beta)\) is a Beta - distribution.
\item In the case \(c \ne 0, \: \sigma \ne 0\),  the equation \eqref{eq} also allows for the closed form solutions. Assuming for simplicity \(\sigma^{2}/2 =1\), \(c=-(b+1)\), we get the solution of \eqref{eq} in the following form:
\begin{eqnarray}
\label{kx}
	k_{3}(x)=C\: x^{b-1/2} \exp\left\{\frac{1}{2x}\right\} I_{\mu}\left(\frac{1}{2x}\right),
\end{eqnarray}
where we denote by \(I_{\mu}\) the modified Bessel function of the first kind, \(\mu=\sqrt{a+1/4}\), and the constant \(c\) is later chosen to guarantee the condition \(\int_{0}^{\infty}k_{3}(x) dx =1\). \vsp
\end{enumerate}

For the numerical study, we assume that the data follows the model \eqref{Ainfty} where the process \(\xi_{t}\) is defined by \eqref{xi} with 
\(c=1.8\), \(\sigma=0\),  and the subordinator \(\T_{t}\) has a L{\'e}vy density in the form \eqref{nu} with \(a=0.7\), \(b=0.2\). The values of the integral \(A_{\infty}\) are simulated from the Beta-distribution, see \eqref{kx2}. 

On the first step, we estimate  \(A_{\infty}^{s}\)  for \(s=u+\ii v\) with \(u=29\) and \(u=30\) and  \(v\) from the equidistant grid between \(-30\) and \(30\). Next, we estimate the Laplace exponent by the formula \eqref{step2}. One can visually compare the proposed estimator and the theoretical value \((c+a/(b+s))*s\) looking at  Figure \ref{plot1}. 
\begin{figure}
\begin{center}
\includegraphics[width=0.6\linewidth ]{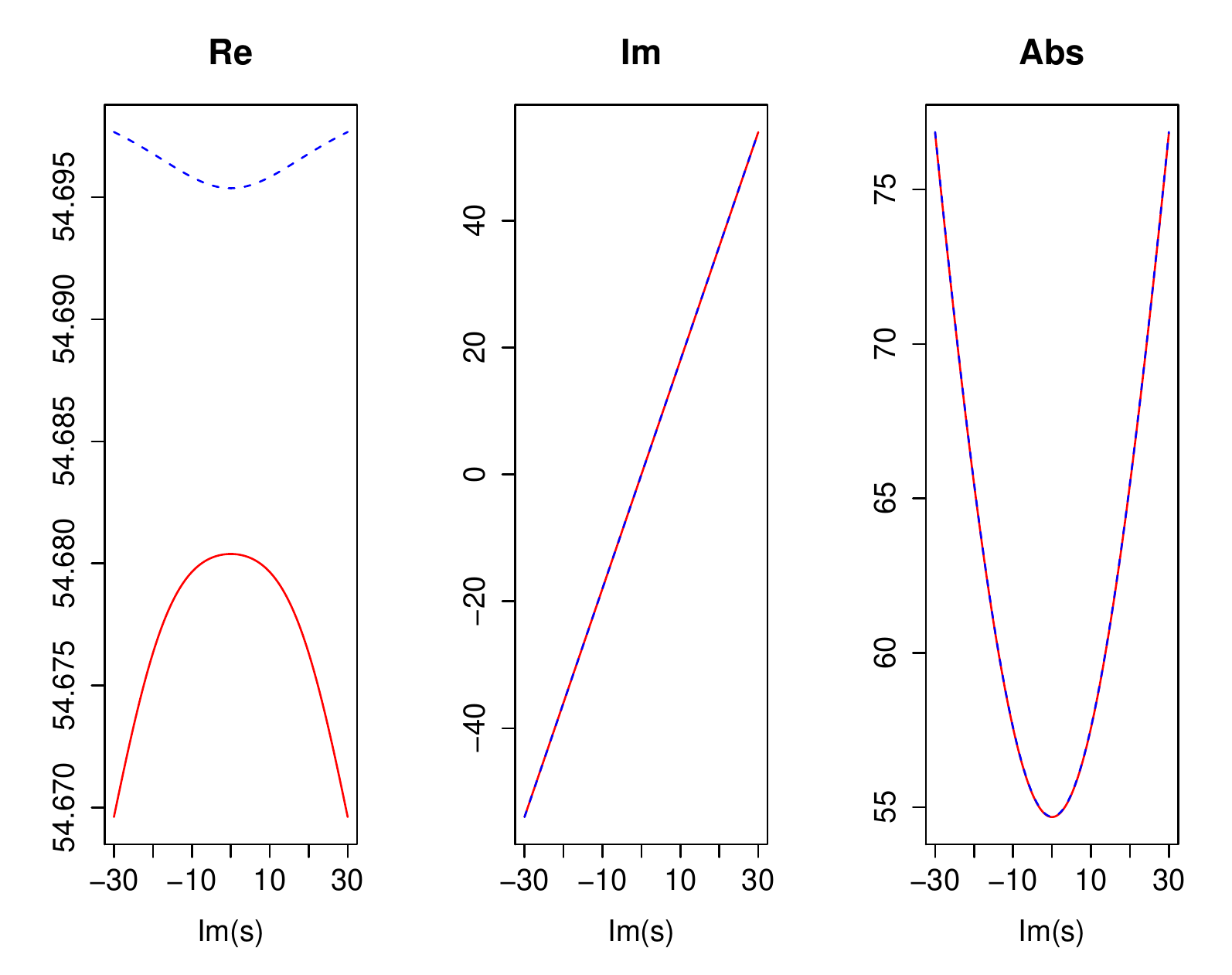}\caption{Plots of theoretical (blue dashed) and empirical (red solid) Laplace exponents for Example 1. Graphs present real, imaginary parts and absolute values. In spite of visual distinction in the real parts, the difference between theoretical and empirical Laplace exponents is quite small.\label{plot1}}
\end{center}
\end{figure}

\begin{figure}
\begin{center}
\includegraphics[width=0.6\linewidth ]{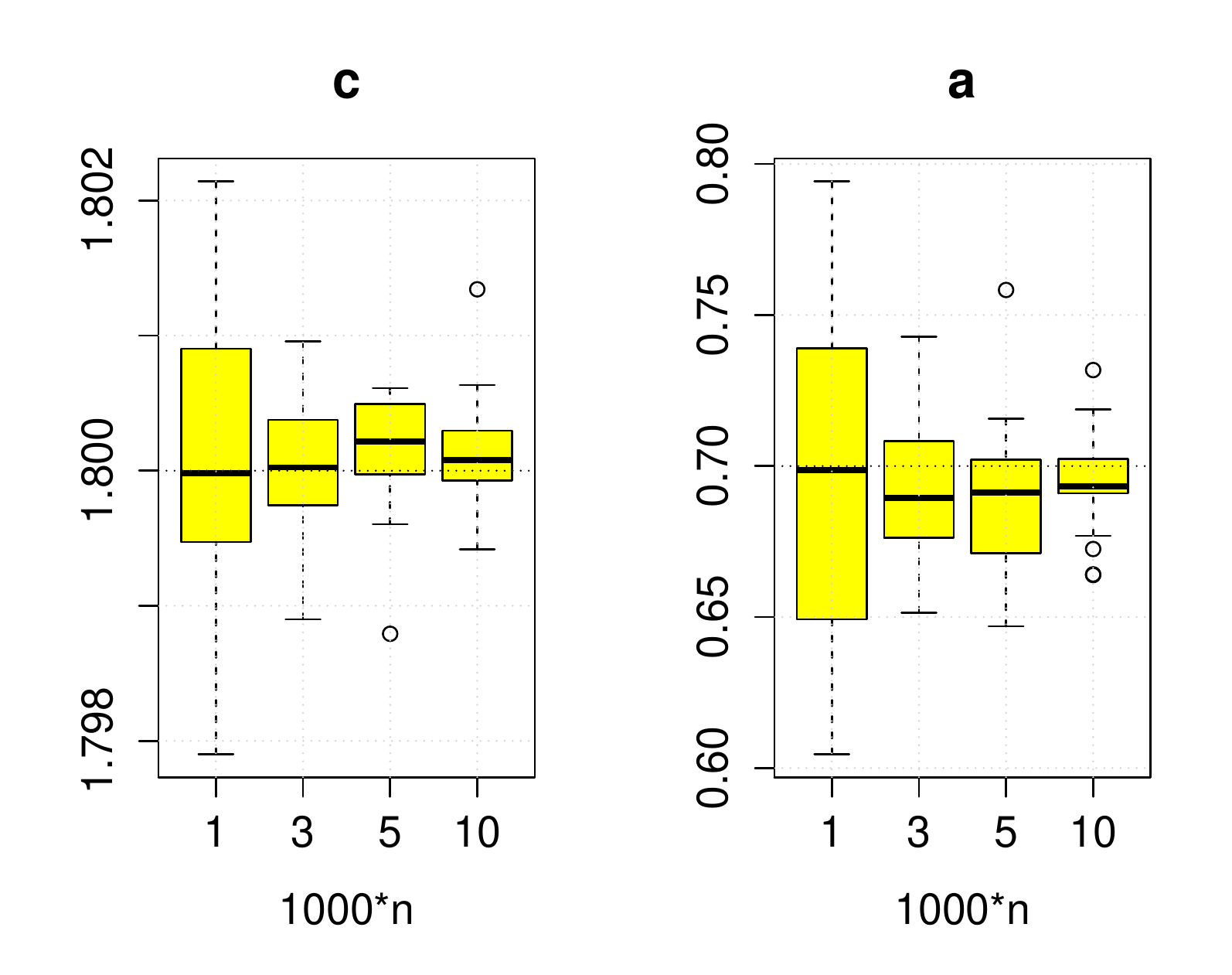}\caption{Boxplots for the estimates of \(c\) and \(a\) for different values of \(n\) based on 25 simulation runs.\label{plot2}}
\end{center}
\end{figure}
Estimation of the parameters \(c\) and \(a\) is provided by  \eqref{hatc} and \eqref{hata} resp. The boxplots of this estimates are presented on Figure \ref{plot2}. \vsp

\sec{Example 2.}  Consider the compound Poisson process 
\[
	\xi_{t} = -\log q \left(\sum_{k=1}^{N_{t}} \eta_{k} \right),
\]
where \(q \in (0,1)\) is fixed, \(N_{t}\) is a Poisson process with intensity \(\lambda\) and \(\eta_{k}\) are i.i.d. random variables with a distribution \(\L\).
It is a worth mentioning that the integral \(A_{\infty}\) allows the representation 
\begin{eqnarray*}
	A_{\infty} = \int_{0}^{\infty} q^{-\xi_{t}} dt = \sum_{n=0}^{\infty} q^{-n} \left(T_{n+1} - T_{n}\right),
\end{eqnarray*}
where \(T_{n}\) is the jump time \(T_{n} = \inf\left\{ t: N_{t} =n \right\}\). Note that if \(\eta_{k}\) takes only positive values then \(-\xi_{t}\) is a subordinator. For the overview of the properties of the integral \(A_{\infty}\) in the particular case \(\L \equiv 1\) (that is, \(\xi_{t}\) is a Poisson process up to a constant), we refer to \cite{BertoinYor}.

Fix some positive \(\alpha\) and consider the case when \(\L\) is  the standard Normal distribution truncated on the interval \((\alpha, +\infty)\).  The density function of \(\L\) is given by \[p_{\L} (x)= p(x) / (1-F(\alpha)),\] where \(p(\cdot)\) and \(F(\cdot)\) are pdf and cdf of the standard Normal distribution. In this case, the Laplace exponent  of \(\xi_{t}\) is equal to 
\begin{eqnarray*}
	\psi(s) = \lambda \left[
		1 -  \frac{	
			1-F\left(\alpha + (\log q) s\right)
		}
		{
			1 - F\left(\alpha\right)
		}\;
		\exp \left\{
		-
			\frac{
				\left(
					\log q 
				\right)^{2}
				s^{2}
			}{2}
		\right\}
	\right],
\end{eqnarray*}
where the function \(F(\cdot)\) in the complex point \(z\) can be calculated from the error function: 
\[
F(z) := \frac{1}{2} \left( \erf\left( \frac{z}{\sqrt{2}} \right) + 1 \right), \quad \mbox{where} \quad 
\erf(z) = \frac{2}{\sqrt{\pi}}\int_{0}^{z} e^{-s^{2}} ds.
\]
In this example, we aim to  estimate the L{\'e}vy measure of the process \(\-\xi_{t}\), which is equal to 
\[
\nu(dx)  = \frac{\lambda}{1-F(\alpha)} \: p(x)  I\{x>\alpha\} dx.
\]
For the numerical study, we take \(q=0.5, \alpha=0.1\), and \(\lambda=1\). First, we estimate the Laplace exponent by  \eqref{step2}. The quality of estimation  at the complex points \(s=u+\ii v\) with \(u=1\) and \(v \in [-5,5]\) can be visually checked on Figure~\ref{fig3}. 
\begin{figure}
\begin{center}
\includegraphics[width=0.6\linewidth ]{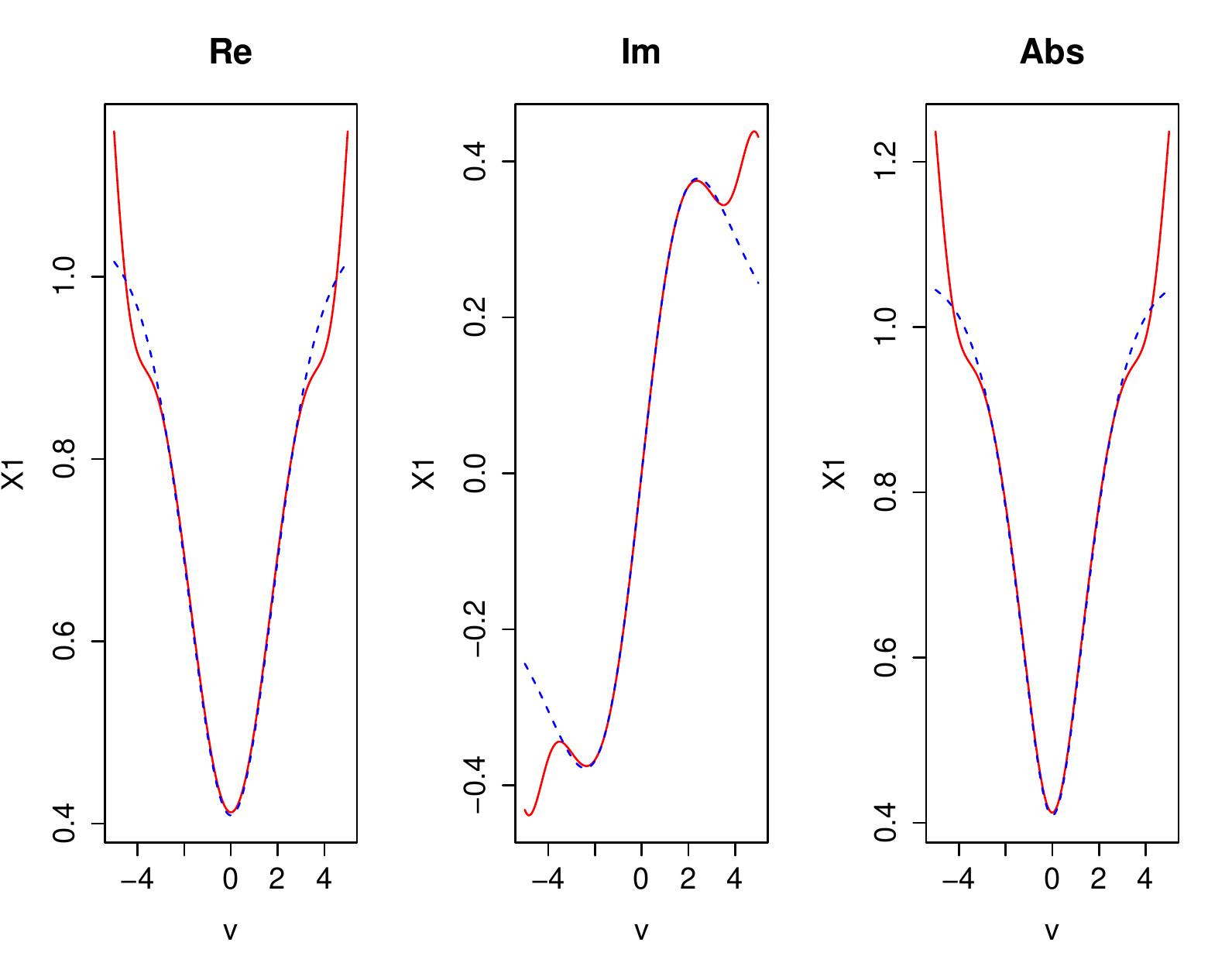}\caption{Plots of theoretical (blue dashed) and empirical (red solid) Laplace exponents for Example 2. Graphs present real, imaginary and absolute values. For  \(v \in[-3,3]\) the curves are visually indistinguishable. \label{fig3}}
\end{center}
\end{figure}
\begin{figure}
\begin{center}
\includegraphics[width=0.6\linewidth ]{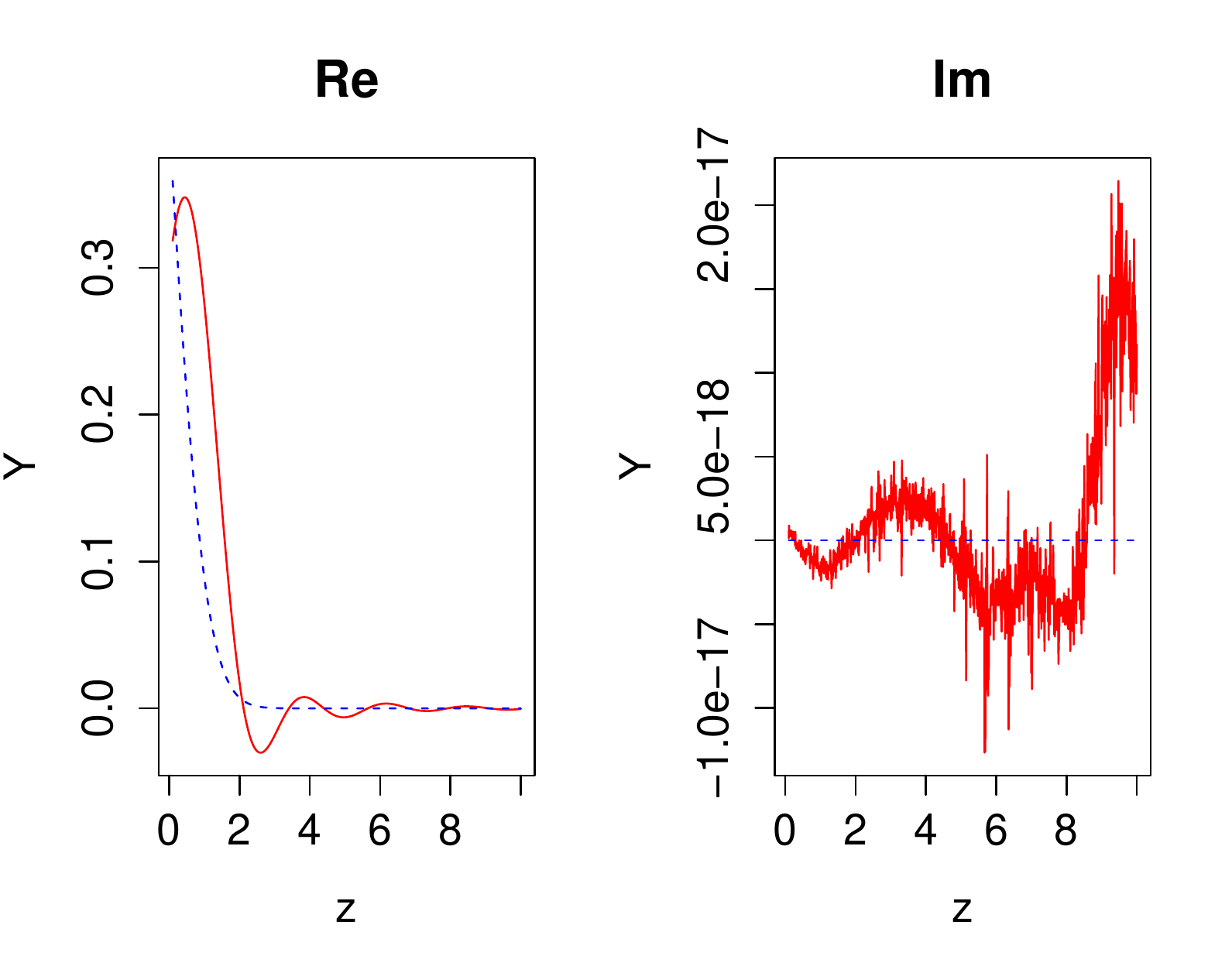}\caption{Plots of the L{\'e}vy measure  (blue dashed line) and its estimate (red solid) depicted for real (left) and imaginary (right) parts. Note that the values on the right plot are quite small. \label{fig4}}
\end{center}
\end{figure}

Next, we proceed with the estimation of the Fourier transform of the measure \(\bar{\nu}(x):=e^{-u x} \nu(x)\)
of the L{\'e}vy measure by applying \eqref{step5}. For the last step of the Algorithm~2, reconstruction of the L{\'e}vy measure by \eqref{step6},  we follow  \cite{Belomest2011}  and take the so-called flat-top kernel, which is defined as follows:	
\begin{eqnarray*}
\K(x)=
\begin{cases}
1, & |x|\leq 0.05, \\
\exp\left( -\frac{e^{-1/(|x|-0.05)}}{1-|x|} \right), & 0.05<|x|<1,\\
0, &  |x|\geq 1.
\end{cases}
\end{eqnarray*}
The quality of the resulted  estimation  is given on Figure~\ref{fig4}.

\section{Theoretical study}
\label{theory}
\begin{thm}
\label{thmphi}
Consider the model \eqref{Ainfty} with L{\'e}vy process \(\xi\) in the form \eqref{xi} satisfying the assumptions (A1) - (A3). 
Let the sequence \(V_{n}\) tend to \(\infty\) and moreover satisfy the assumption 
\begin{eqnarray}
\label{lambda}
     \Lambda_{n} := V_{n} \exp\left\{\gamma V_{n}\right\} \sqrt{\log V_{n}} = o \left(\sqrt{\frac{n}{\log(n)}} \right), \; n \to \infty,
\end{eqnarray}
where the constant \(\gamma\) is introduced in (A2).  Then there exists a set \(\W_{n}\) such that  \(\PP\{\W_{n}\}>1 - \alpha n^{-1-\delta}\) (with some positive \(\alpha\) and \(\delta\)) and 
\begin{eqnarray}
\label{Wn}
\W_{n}
 \subset
\Biggl\{
\sup_{v \in [\eps V_{n}, V_{n}]}
\left|
	\hat\psi_{n}(u+\ii v) - \psi(u + \ii v)
\right|
\leq \beta \: \Lambda_{n}\sqrt{\frac{\log(n)}{n}} 
\Biggr\} 
\end{eqnarray}
where \(\beta>0\) and \(u=u^{\circ}\) was introduced in (A2).
\end{thm}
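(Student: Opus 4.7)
Write $a(v):=\E[A_\infty^{(u-1)+\ii v}]$, $b(v):=\E[A_\infty^{u+\ii v}]$ and let $\hat a,\hat b$ denote the corresponding empirical averages, so that $\hat\psi_n(u+\ii v)-\psi(u+\ii v)=s\bigl(\hat a/\hat b-a/b\bigr)$ with $s=u+\ii v$. The plan has two parts: (i) control uniformly in $v\in[\eps V_n,V_n]$ the stochastic errors $|\hat a-a|$ and $|\hat b-b|$, and (ii) propagate this to the ratio $\hat a/\hat b$ using the exponential lower bound on $|b|$ implied by (A2).

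\textbf{Pointwise and uniform concentration.} Under (A1) with $c>0$ the subordinator satisfies $\xi_t\ge ct$, hence $A_\infty\le 1/c$ a.s.\ and $|A_{\infty,k}^{u+\ii v}|=A_{\infty,k}^{u}$ is a.s.\ bounded (the degenerate case $c=0$ requires truncating $A_{\infty,k}$ at a slowly growing level and invoking finiteness of $\E[A_\infty^{2u}]$, implicit in (A2)). Applying Hoeffding's inequality to the real and imaginary parts separately yields a Gaussian-type tail $\PP\{|\hat b(v)-b(v)|>t\}\le 4e^{-c_0 n t^2}$, and similarly for $\hat a$. To upgrade the pointwise bound to a uniform one on $[\eps V_n,V_n]$, observe that $v\mapsto A_\infty^{u+\ii v}=A_\infty^u\exp(\ii v\log A_\infty)$ is Lipschitz in $v$ with random Lipschitz constant $A_\infty^{u}|\log A_\infty|$, whose moments are finite under our assumptions. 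A covering argument with a net of cardinality polynomial in $V_n$, together with a union bound, produce, on a set $\W_n$ of probability at least $1-\alpha n^{-1-\delta}$,
\[
\sup_{v\in[\eps V_n,V_n]}\!\bigl(|\hat a(v)-a(v)|+|\hat b(v)-b(v)|\bigr)\le C_1\sqrt{\log V_n}\,\sqrt{\log(n)/n}.
\]

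\textbf{Ratio propagation.} By (A2), $|b(v)|\gtrsim e^{-\gamma v}\ge e^{-\gamma V_n}$ on the range, and the recursion \eqref{moment} combined with the polynomial growth \eqref{upp} gives $|a/b|=|\psi(s)/s|=O(1)$. Condition \eqref{lambda} forces $e^{\gamma V_n}\sqrt{\log V_n\cdot\log(n)/n}\to 0$, so on $\W_n$ we have $|\hat b|\ge|b|/2\gtrsim e^{-\gamma V_n}$. The elementary identity
\[
\frac{\hat a}{\hat b}-\frac{a}{b}=\frac{\hat a-a}{\hat b}-\frac{a}{b}\cdot\frac{\hat b-b}{\hat b}
\]
combined with $|s|\le 2V_n$ then yields
\[
|\hat\psi_n-\psi|\lesssim V_n\,e^{\gamma V_n}\bigl(|\hat a-a|+|\hat b-b|\bigr)\lesssim V_n\,e^{\gamma V_n}\sqrt{\log V_n}\,\sqrt{\log(n)/n}=\Lambda_n\sqrt{\log(n)/n},
\]
as claimed.

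\textbf{Main obstacle.} The tightest step is the uniform concentration: the target precision must beat the exponentially small denominator $|b|\asymp e^{-\gamma V_n}$, so the $\sqrt{\log V_n}$ factor in $\Lambda_n$ has to be extracted through a chaining or discretization at the appropriate scale, without losing the $\sqrt{\log n}$ factor needed for the probability $1-\alpha n^{-1-\delta}$. The boundary case $c=0$, in which $A_\infty$ has no deterministic upper bound, is a secondary but non-trivial complication and requires a truncation argument together with a moment bound on $A_\infty^{u^\circ}$ drawn from (A2).
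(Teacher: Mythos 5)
Your overall route is essentially the paper's: you split the error as \(s\bigl(\hat a/\hat b-a/b\bigr)\), bound the prefactor via \(|\psi(s)|=O(\Imm (s))\) from \eqref{upp} together with the recursion \eqref{moment} (the paper does the same thing through the normalized errors \(J(s)\), \(J(s-1)\); your algebraic identity is that decomposition in disguise), invoke the lower bound \(|\E[A_\infty^{u+\ii v}]|\gtrsim e^{-\gamma V_n}\) from (A2), and use condition \eqref{lambda} to keep the empirical denominator comparable to the true one. Where you genuinely diverge is the uniform concentration step, and there your argument has a gap in the setting of the theorem. The observations \(A_{\infty,k}=V_{t_k}\) are values of the stationary GOU process sampled along a time grid: they are identically distributed but \emph{dependent}, not i.i.d. Hoeffding's inequality applied to the real and imaginary parts of \(\hat a(v)-a(v)\) and \(\hat b(v)-b(v)\) is therefore not justified; the paper replaces it by an exponential inequality for strongly mixing sequences (Lemma~\ref{EIB}) and, built on it, a uniform weighted bound in \(v\) (Proposition~\ref{ExpBounds}, applied with the weight \(w^{*}(x)=\log^{-1/2}(e+|x|)\)), which is exactly what produces the factor \(\sqrt{\log (V_n)\,\log(n)/n}\) on a set \(\W_n\) of probability \(1-\alpha n^{-1-\delta}\). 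Without a mixing-type deviation inequality, your covering-plus-union-bound argument does not go through for these data.

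A second, smaller, issue is the boundedness you feed into Hoeffding. Even when \(c>0\) (so that \(A_\infty\le 1/c\)), the summands entering \(\hat a\) have modulus \(A_{\infty,k}^{u-1}\); if \(u^{\circ}<1\) this is unbounded near \(0\), since \(A_\infty\) can be arbitrarily small (e.g.\ the Beta law of Example~1 has support reaching down to \(0\)). So boundedness can fail for the numerator's sample mean, not only in the boundary case \(c=0\) that you flag; a truncation argument with moment control (as is built into the proof of Proposition~\ref{ExpBounds} through the threshold \(\Xi_n\)) is needed there as well. With these two repairs — a dependent-data exponential inequality plus truncation — your ratio-propagation step coincides with the paper's and correctly yields the bound \(\beta\,\Lambda_n\sqrt{\log(n)/n}\) on \(\W_n\).
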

\begin{rem}
The condition \eqref{lambda} fulfills for instance for \(V_{n} = \kappa \log (n)\) with \(\kappa<1/\left(2 \gamma \right)\).
\end{rem}
\begin{proof}
\ttt{1.} Denote 
\begin{eqnarray*}
	J(s) : = \Bigl(	\widehat	\E_{n} \left[
		A_{\infty}^{s}
	\right]
	- 
	\E \left[
		A_{\infty}^{s}
	\right]
\Bigr) / \E \left[
		A_{\infty}^{s}
	\right],		
\end{eqnarray*}
where \(s=u+\ii v\). In this notation, 
\begin{multline}
\label{proof1}
\Biggl| 
 	\psi(s) - \hat\psi_{n}(s) 
\Biggr| = 
\left|
s \Biggl(
\frac{
	\E \left[
		A_{\infty}^{s-1}
	\right]
}{
	\E \left[
		A_{\infty}^{s}
	\right]
}
-
\frac{
	\widehat\E_{n} \left[
		A_{\infty}^{s-1}
	\right]
}{
	\widehat\E_{n} \left[
		A_{\infty}^{s}
	\right]
}
\Biggr)
\right|
\\=
\Biggl|
	s \frac{
	\E \left[
		A_{\infty}^{s-1}
	\right]
}{
	\E \left[
		A_{\infty}^{s}
	\right]
}
\Biggr|
\; \cdot \;
\Biggl|
	\frac{J(s) - J(s-1)}{1+J(s)} 
\Biggr|.
\end{multline}
By \eqref{moment}, the first term is equal to \(|\psi(s)|\), and therefore by \eqref{upp} it is bounded by \(C_{1} \Imm (s)\) for \(\Imm(s)\) large enough with some \(C_{1}>0\).  As for the second term, we firstly note that 
\begin{eqnarray*}
	\Biggl|
	\frac{J(s) - J(s-1)}{1+J(s)} 
\Biggr|
\leq 
\frac{|J(s)|+|J(s-1) |}{1-|J(s)|}.
\end{eqnarray*}
The aim of the further proof is to show that  the right hand side in the last inequality is bounded by \(\sqrt{\log(n)/n}\) on  a probability set with desired properties.

\ttt{2.}  Proposition~\ref{ExpBounds} yields that there exists such set \(\W_{n}\) of probability mass larger than \(1- \alpha n^{-1-\delta}\), such that it holds on this set 
\begin{eqnarray}
\label{supim}
	\sup_{s: \: \Imm(s) \in I_{n}} \Bigl| \widehat 
	\E_{n} \left[
		A_{\infty}^{s}
	\right]
	-
	\E 
	\left[
		A_{\infty}^{s}
	\right]
	\Bigr|
	\lesssim \sqrt{\log (V_{n})  \log(n) / n}, \qquad n \to \infty,
\end{eqnarray}
where \(\alpha\) and \(\delta\) are positive, and  \(I_{n}:=[\eps V_{n}, V_{n}]\). In fact, direct application of   Proposition~\ref{ExpBounds} with  a weighting function \(w^{*}(x) :=\log^{-1/2} \left(e+|x|\right)\) gives
\begin{eqnarray*}
\sup_{v \in I_{n}} \Bigl| 
	\widehat \E_{n} \left[
		A_{\infty}^{u+\ii v}
	\right]
	-
	\E 
	\left[
		A_{\infty}^{u+\ii v}
	\right]
	\Bigr|
	&\leq&
	 \sup_{v \in I_{n}} \left[ 
	 \Bigl|
	 	\frac{w^{*}(v)}{\inf_{x \in I_{n}} w^{*}(x)} 
	\Bigr|
	\right.
	\\
	 && 	\left. \hspace{2cm}
	 \cdot
	 \Bigl|
	 \left(
		\widehat{\E}_{n} \left[
			A_{\infty}^{u+\ii v}
		\right]
		-
		\E 
		\left[
			A_{\infty}^{u+\ii v}
		\right]
	\right)
	\Bigr| \right]\\
	&\leq &
	\sqrt{\log\left(e+V_{n}\right)}\\ 
	&&\hspace{1cm} \cdot
	\sup_{v \in I_{n}} \Bigl|
	 	w^{*}(v)
	 \left(
		\widehat {\E}_{n} \left[
			A_{\infty}^{u+\ii v}
		\right]
		-
		\E 
		\left[
			A_{\infty}^{u+\ii v}
		\right]
	\right)
	\Bigr|\\
	 &\lesssim& \sqrt{\log (V_{n})  \log(n) / n}, \qquad n \to \infty.
\end{eqnarray*}

\ttt{3.} Formula \eqref{supim} in particularly means the following inequlity holds on the set \(\W_{n}\) 
\begin{eqnarray}
\label{supJ}
	\sup_{s: \: \Imm(s) \in I_{n}}  |J (s)| \lesssim  \exp\{ \gamma V_{n}\}\; \sqrt{\log (V_{n})  \log(n) / n} \qquad n \to \infty, .
\end{eqnarray}
It is worth mentioning that under the assumption \eqref{lambda}, 
\begin{eqnarray}
\label{J}
	\sup_{s: \: \Imm(s) \in I_{n}}  |J (s)|  \to 0 \quad \mbox{ as } \quad n \to \infty.
\end{eqnarray}
Substituting   \eqref{supJ} into \eqref{proof1} and taking into account \eqref{upp} and \eqref{J}, we arrive at the following bound for the quality of the estimate \(\hat\psi_{n}(s)\):
\begin{eqnarray*}
\Bigl| 
 	\psi(s) - \hat\psi_{n}(s) 
\Bigr| \lesssim 
V_{n} \exp\{ \gamma V_{n}\}\; \sqrt{\log (V_{n})  \log(n) / n}
, 
\end{eqnarray*}
which holds on the set \(\W_{n}\). This observation completes the proof.
\end{proof}

\begin{thm}
\label{thm3}
Consider the setup of Theorem~\ref{thmphi} and take \(V_{n} = \kappa \log(n)\) with \(\kappa<1/(2\gamma)\). 
Then  it holds
\begin{eqnarray}
\label{emb}
\W_{n}
 \subset
\Biggl\{
\left| 
	\tilde{c}_{n} - c 
\right| 
\leq  \frac{\zeta_{1}}{\log^{r+2}(n)}
\Biggr\}
\quad \mbox{and} \quad 
\W_{n}
\subset
\Biggl\{
\left| 
	\tilde{a}_{n} - a
\right| 
\leq \frac{\zeta_{2}}{\log^{r+1}(n)}
.
\Biggr\}
,
\end{eqnarray}
where \(s\) is introduced in (A3), the set \(\W_{n}\) is defined in Theorem~\ref{thmphi}, and \(\zeta_{1}, \zeta_{2}>0\).
\end{thm}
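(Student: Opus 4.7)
The plan is to decompose the error $\tilde c_n - c$ into a stochastic contribution, bounded via Theorem~\ref{thmphi}, and a deterministic bias, bounded via the smoothness hypothesis (A3); the bound for $\tilde a_n - a$ then follows by an analogous decomposition with one extra term accounting for the propagation of the error in $\tilde c_n$.

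First, I would use identity \eqref{Im}, $\Imm \psi(u+\ii v) = cv - \Imm \F_{\bar\nu}(-v)$, together with the explicit minimizer of \eqref{optim}. After the change of variables $v = \alpha V_n$, this yields
\begin{equation*}
\tilde c_n - c \;=\; \underbrace{\frac{\int_\eps^1 w(\alpha)\, \alpha\, \Imm[\hat\psi_n - \psi](u+\ii \alpha V_n)\, d\alpha}{V_n \int_\eps^1 w(\alpha)\, \alpha^2\, d\alpha}}_{=:\,S_c} \;+\; \underbrace{\frac{\int_\eps^1 w(\alpha)\, \alpha\, \Imm \F_{\bar\nu}(-\alpha V_n)\, d\alpha}{V_n \int_\eps^1 w(\alpha)\, \alpha^2\, d\alpha}}_{=:\,B_c}.
\end{equation*}
On the event $\W_n$, Theorem~\ref{thmphi} gives $\sup_{v \in [\eps V_n, V_n]}|\hat\psi_n(u+\ii v) - \psi(u+\ii v)| \leq \beta \Lambda_n \sqrt{\log n/n}$, so $|S_c| \lesssim \Lambda_n \sqrt{\log n/n}/V_n$. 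With $V_n = \kappa \log n$ and $\kappa < 1/(2\gamma)$ one has $\exp(\gamma V_n) = n^{\gamma\kappa}$ with $\gamma\kappa < 1/2$, whence $|S_c|$ decays polynomially in $n$ and is eventually dominated by $\tfrac{\zeta_1}{2}\log^{-(r+2)}(n)$.

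The crux of the argument is the bias term $B_c$. Since $\bar\nu$ is supported on $\R_+$ with exponential tail (due to the factor $e^{-u x}$, $u>0$), assumption (A3) entails that $\bar\nu^{(r)}$ is integrable on $\R$, and $r$ integrations by parts in the Fourier integral defining $\F_{\bar\nu}$ give $|\F_{\bar\nu}(-v)| \lesssim |v|^{-r}$ as $|v|\to\infty$. A further oscillatory-integral argument in the variable $\alpha$ inside $\int_\eps^1 w(\alpha)\, \alpha\, \Imm\F_{\bar\nu}(-\alpha V_n)\, d\alpha$ — amounting to an additional integration by parts exploiting the smoothness of $\alpha \mapsto w(\alpha)\alpha$ against the rapid oscillation $\sin(\alpha V_n x)$ — contributes an extra factor $V_n^{-1}$, producing $|B_c| \lesssim V_n^{-(r+2)} = (\kappa\log n)^{-(r+2)}$. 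Combined with the stochastic bound, this gives the first inclusion in \eqref{emb}.

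For $\tilde a_n$, writing $\tilde a_n = (\text{weighted average of }\Ree\hat\psi_n(u+\ii\alpha V_n)) - \tilde c_n u$ and using \eqref{Re} one obtains
\begin{equation*}
\tilde a_n - a \;=\; S_a \;-\; \frac{\int_\eps^1 w(\alpha)\, \Ree \F_{\bar\nu}(-\alpha V_n)\, d\alpha}{\int_\eps^1 w(\alpha)\, d\alpha} \;+\; u\,(c - \tilde c_n),
\end{equation*}
where $S_a$ is handled exactly as $S_c$. The bias term here is $O(V_n^{-(r+1)})$ by the same Fourier-decay analysis of the preceding paragraph (the extra $V_n^{-1}$ is no longer available since the denominator does not carry $V_n$), whereas $|u(c-\tilde c_n)| = O(V_n^{-(r+2)})$ is of lower order by the first part. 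This delivers the second inclusion in \eqref{emb}. The main obstacle throughout is the bias analysis: the naive integration by parts in $x$ alone only yields the suboptimal rate $V_n^{-(r+1)}$ for $B_c$, and extracting the final factor of $V_n^{-1}$ requires a careful oscillatory estimate in $\alpha$, including the delicate treatment of boundary contributions from $w$ at the endpoints $\eps$ and $1$.
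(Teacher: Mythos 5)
Your overall architecture coincides with the paper's: split $\tilde c_n-c$ into a stochastic term, controlled on $\W_n$ via Theorem~\ref{thmphi} and made negligible by the choice $V_n=\kappa\log n$ with $\kappa\gamma<1/2$, plus a deterministic bias coming from $\Imm \F_{\bar\nu}(-v)$; for $\tilde a_n$ you add the propagated term $u\,|\tilde c_n-c|$ exactly as the paper does. The stochastic bound and the error propagation for $\tilde a_n$ are correct and match the paper's argument.

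The gap is in the bias estimate, which is the only nontrivial step. As written, you first take absolute values (claiming $|\F_{\bar\nu}(-v)|\lesssim |v|^{-r}$ after $r$ integrations by parts in $x$) and then claim to recover an additional factor $V_n^{-1}$ from oscillation in $\alpha$; these two steps cannot be performed sequentially, since once $|\F_{\bar\nu}|$ has been bounded the phase information needed for the $\alpha$-gain is gone. The two gains must be harvested simultaneously from the double integral, and this is precisely what the paper's Lemma~\ref{lemlem} does: by Plancherel, $\int_0^\infty w_n(v)\F_{\bar\nu}(v)\,dv$ is paired as $\bar\nu^{(r)}\in L^\infty$ against $\F^{-1}\bigl[w_n(\cdot)/(\ii\cdot)^r\bigr]$, whose $L^1$-norm scales as $V_n^{-(r+k)}$ (with $k=2$ for the $c$-weight $w_n^*(v)=V_n^{-2}w^*(v/V_n)$ and $k=1$ for the $a$-weight); the extra $V_n^{-1}$ is exactly the $L^1$ rescaling of this kernel and is available only under the hypothesis $\|\F_{w(u)/u^r}\|_{L^1}<\infty$, i.e.\ smoothness of $w$ including its behaviour at the endpoints $\eps$ and $1$. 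Your sketch neither verifies such a condition on $w$ (you only flag the endpoint boundary terms as ``delicate'') nor the vanishing of the boundary terms at $x=0$ required for your $r$ integrations by parts; moreover you invoke integrability of $\bar\nu^{(r)}$, which (A3) does not provide and which the Plancherel route avoids, using only $\|\bar\nu^{(r)}\|_{\infty}$. Consequently the key claims $|B_c|\lesssim V_n^{-(r+2)}$ and the $V_n^{-(r+1)}$ bias bound for $\tilde a_n$ are asserted rather than proved; supplying Lemma~\ref{lemlem}, or an equivalent combined oscillatory estimate with explicit hypotheses on $w$, is what is missing.
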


\begin{proof}
\ttt{1.}
First note that the estimate \eqref{optim} can be rewriten as 
\begin{multline*}
	\tilde{c}_{n} = \int_{0}^{\infty} w_{n}^{*}(v) \Imm \hat{\psi}_{n} (u+\ii v) dv,\\
\mbox{where} \quad 
	 w_{n}^{*}(v) = \frac{ w_{n}(v) v } {\int w_{n}(y) y^{2} dy} = \frac{1}{V_{n}^{2}} w^{*}\left( \frac{v}{V_{n}}\right),
\end{multline*}
where \( w^{*} (x) = \left(w(x) x \right) / \left( \int w(y)  y^{2} dy \right)\).  Next, consider the following ``theoretical counterpart'' of the estimate \(\tilde{c}_{n}\):
\begin{eqnarray*}
	\bar{c}_{n} :=  
\int_{0}^{\infty} w_{n}^{*}(v) \Imm \psi (u+\ii v) dv,
\end{eqnarray*}
and note that 
\begin{eqnarray}
\label{tildec}
  \left| \hat{c}_{n} - c \right| \leq \left| \hat{c}_{n} - \bar{c}_{n} \right|  + \left| \bar{c}_{n} - c \right|.
\end{eqnarray}
The first summand in the right hand side of \eqref{tildec} is bounded on the set \(\W_{n}\) for \(n\) large enough:
\begin{multline}
\label{c1}
	 \left| \hat{c}_n -\bar{c}_n \right| \leq \ A \: \Lambda_{n}\sqrt{\frac{\log(n)}{n}} \frac{1}{V_{n}},
\\
\mbox{where} \quad
	 A:= V_{n}\beta \;
	 \left| 
	 \frac{ \int w_{n}^{*}(v) v dv} {\int w_{n}^{*}(v) v^{2} dv}
	\right|
	= 
	 \left| 
	 \frac{ \int_{\eps}^{1} w^{*}(v) v dv} {\int_{\eps}^{1} w^{*}(v) v^{2} dv}
	\right|
\end{multline}
doesn't depend on \(n\).
As for the second term, using \(\int w_{n}^{*} (v) v  dv =1\), we get 
\begin{eqnarray*}
 \left| \bar{c}_n - c \right| =
 \left|
	 \int_{0}^{\infty} w_{n}^{*}(v) \Bigl[
 		\Imm \hat{\psi}_{n} (u+\ii v) dv - c v 
	\Bigr] dv 
\right|
= 
 \left|
	 \int_{0}^{\infty} w_{n}^{*}(v) 
 		\Imm \F_{\bar{\nu}} (-v) dv
\right|.
\end{eqnarray*} 

 Applying Lemma~\ref{lemlem} with \(w_{n}^{*}(v) = V_{n}^{-2} w_{1}^{*}\left( v/V_{n}\right)\),
we get 
\begin{eqnarray}
\label{c2}
\Bigl|
	 \int_{0}^{\infty} w_{n}^{*}(v) 
 		\F_{\bar{\nu}} (v) dv
\Bigr| \lesssim V_{n}^{-(r+2)}, \qquad n \to \infty
.\end{eqnarray}

 Substituting \eqref{c1} and \eqref{c2} into \eqref{tildec}, and bearing in mind our choice of \(V_{n}\), we complete the proof of the first embedding in \eqref{emb}.
  
\ttt{2.} 
Without limitations we can assume that \(\int_{\R+} w_{n}(v) dv = \int_{\eps}^{1} w(v) dv =1\).   The second embedding directly follows from Theorem~\ref{thmphi} and the first part of this proof, because 
\begin{eqnarray*}
	\left| \tilde{a}_{n} - a \right| 
	&=& 
	\left| 
	\left[
		\int_{\R_{+}}
			 w_{n}(v)  \Ree \hat\psi_{n}(u+\ii v) dv 
		-\tilde{c}_{n} u
	\right]
	\right.
	\\&&\hspace{1.5cm} \left. - 
	\left[
		\int_{\R_{+}}
			 w_{n}(v) \Bigl(  \Ree \psi(u+\ii v) +\Ree \F_{\bar\nu} (-v) \Bigr) dv 
		- c u
	\right]
	\right|\\
	&\leq& 
	\left| \tilde{c}_n -c \right| u
	+ 
	\left|
		\int_{\R_{+}}
			 w_{n}(v) \Bigl(  \Ree \hat\psi_{n}(u+\ii v) -  \Ree \psi(u+\ii v)  \Bigr) dv 
	\right|
	\\
	&& \hspace{4cm}
	+ 
	\left|
		\int_{\R_{+}}
			 w_{n}(v) \Ree \F_{\bar\nu} (-v)  dv 
	\right|
\\
	&\lesssim& 
	\frac{\zeta_{1}}{\log^{r+2}(n)} +\beta \: \Lambda_{n}\sqrt{\frac{\log(n)}{n}}  + \frac{\lambda}{\log^{r+1} (n)}\\
	&\lesssim& \frac{1}{\log^{r+1} (n)}, \qquad n \to \infty,
\end{eqnarray*}
where \(\lambda>0\).  Note that here we use the inequality  \[\left|
		\int_{\R_{+}}
			 w_{n}(v) \Ree \F_{\bar\nu} (-v)  dv 
	\right|
	\lesssim \log^{-(r+1)} (n),
\]
which follows by applying Lemma~\ref{lemlem} to \(w_{n}(v) = V_{n}^{-1} w_{n}(v/V_{n})\). This completes the proof. 
\end{proof}

\begin{thm}
\label{lower}
Let \(\A\) be a set of functions that satisfy assumptions (A1) - (A3). Then it holds
\[
\varliminf_{n \to \infty} 
\inf_{\tilde{c}_{n}^{*}} \sup_{\A}  
	\PP\Biggl\{
\left|
	\tilde{c}_{n}^{*} - c 
\right|
\geq \zeta_{3} \log^{-(r+2)}(n)
\Biggr\}  > 0,
\]
where \(\zeta_{3}\) is some positive constant, the supremum is taken over all models from \(\A\), and infimum - over all possible estimates of the parameter \(c\).
\end{thm}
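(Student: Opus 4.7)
The plan is to apply Le Cam's two-point method. I will construct a pair of Lévy triplets $\pi_j = (c_j, 0, \nu_j) \in \A$, $j \in \{0,1\}$, with $c_1 - c_0 = 2\zeta_3 \log^{-(r+2)}(n)$ for which the product laws $P_0^{\otimes n}$, $P_1^{\otimes n}$ of the $n$ i.i.d.\ observations of $A_\infty$ stay close in total variation. Le Cam's lemma then gives
\[
\inf_{\tilde c_n^*} \sup_{j\in\{0,1\}} \PP_j\bigl\{|\tilde c_n^* - c_j|\geq \zeta_3 \log^{-(r+2)}(n)\bigr\} \geq \tfrac12\bigl(1 - \|P_0^{\otimes n}-P_1^{\otimes n}\|_{\mathrm{TV}}\bigr),
\]
and once the right-hand side is shown to be bounded away from zero uniformly in $n$, the claim follows since $\sup_{\pi\in\A} \geq \sup_{\pi\in\{\pi_0,\pi_1\}}$.

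\textbf{Construction of the alternatives.} Fix a reference triplet $\pi_0 \in \A$ with strict slack in (A1)--(A3) and with $\bar\nu_0$ bounded below by a positive constant on a compact interval $[0,L]$. Set $\delta_n := 2\zeta_3\log^{-(r+2)}(n)$, $V_n := C\log n$ for a constant $C>2/\gamma$ (to be chosen, and unrelated to the bandwidth in the upper bound), and let $\phi\in C_c^\infty(\R)$ be an even cutoff with $\phi\equiv 1$ on $[-\tfrac12,\tfrac12]$ and $\operatorname{supp}\phi\subset[-1,1]$. Define a real function $\eta_n$ through its Fourier transform,
\[
\F_{\eta_n}(-v) := \delta_n\,(u+iv)\,\phi(v/V_n),
\]
and put $\bar\nu_1 := \bar\nu_0 + \eta_n$, $\nu_1(dx):=e^{ux}\bar\nu_1(dx)$, $c_1 := c_0 + \delta_n$. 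By direct substitution in \eqref{estproc}, $\psi_1(u+iv)-\psi_0(u+iv)$ reduces on $|v|\leq V_n/2$ to the residual $\int e^{ux}\eta_n(x)dx$, which is $o(\delta_n)$ by a Taylor expansion exploiting the concentration of $\eta_n$ on scale $1/V_n$. Standard Fourier estimates give $\|\eta_n\|_\infty \lesssim \delta_n V_n^2 = o(1)$ and $\|\eta_n^{(r)}\|_\infty \lesssim \delta_n V_n^{r+2} = O(1)$, so for $n$ large $\nu_1\geq 0$ (yielding (A1)) and (A3) is preserved with a slightly enlarged constant. Assumption (A2) for $\pi_1$ follows from the Mellin recursion $m_j(s)=(\psi_j(s)/s)\,m_j(s+1)$: since $\psi_0\approx\psi_1$ on the matching strip and the residual $\psi_1-\psi_0\sim\delta_n(u+iv)$ outside it is $o(\psi_0)$, the exponential decay rate $\gamma$ of $m_0$ transfers to $m_1$. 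A minor technical point is keeping the support of $\nu_1$ in $\R_+$, handled by first multiplying $\eta_n$ by a spatial cutoff of scale $L\gg 1/V_n$ so as to perturb $\F_{\eta_n}$ only negligibly on $[-V_n/2, V_n/2]$.

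\textbf{Main obstacle: the KL bound.} The crux is to prove $\|P_0^{\otimes n}-P_1^{\otimes n}\|_{\mathrm{TV}} \leq 1-\eta$ uniformly in $n$. By tensorisation and Pinsker's inequality this reduces to $n\cdot\mathrm{KL}(P_0,P_1)\to 0$. Representing the densities by Mellin inversion,
\[
p_j(x) = \frac{1}{2\pi\,x^{u+1}}\int_\R m_j(u+iv)\,x^{-iv}\,dv,
\]
and using that the matching $\psi_0=\psi_1$ on $|v|\leq V_n/2$ forces, via the recursion, $m_0=m_1$ on the same strip, the difference $p_1-p_0$ reduces to a tail integral controlled by (A2):
\[
\|p_1-p_0\|_{L^1(\R_+)} \lesssim \int_{|v|>V_n/2}\bigl(|m_0(u+iv)|+|m_1(u+iv)|\bigr)\,dv \lesssim e^{-\gamma V_n/2} = n^{-\gamma C/2}.
\]
Choosing $C>2/\gamma$ yields $\|p_1-p_0\|_{L^1}=o(n^{-1/2})$, whence $\mathrm{KL}(P_0,P_1)=o(n^{-1})$ provided one also has a uniform lower bound on $p_0$ over the essential support of $A_\infty$. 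Securing this pointwise lower bound on $p_0$---derived from the Mellin inversion of $m_0$ together with the strict positivity and smoothness of $\bar\nu_0$---is the main technical obstacle; the rest is routine.
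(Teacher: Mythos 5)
Your overall two-hypothesis strategy is the same in spirit as the paper's (two models whose drifts differ by $\asymp\log^{-(r+2)}(n)$ while their exponents agree for $|\Imm s|\lesssim\log n$, after which (A2) controls the statistical distance), but two steps of your execution fail. First, the pivotal claim that matching $\psi_0=\psi_1$ on $|v|\le V_n/2$ ``forces, via the recursion, $m_0=m_1$ on the same strip'' is unjustified and, as stated, wrong: the recursion $m(s)=\frac{\psi(s)}{s}\,m(s+1)$ links \emph{different} vertical lines, so (even exact) equality of the exponents along the single line $\Ree s=u$ — which is all your construction can give, since $\int e^{-sx}e^{ux}\eta_n(x)\,dx$ equals $\F_{\eta_n}(-v)$ only when $\Ree s=u$ — implies nothing about equality of the Mellin transforms anywhere; and exact equality of $\psi_0,\psi_1$ on a genuinely two-dimensional strip is impossible when $c_0\ne c_1$, because the analytic function $\delta_n s+\int_{\R_{+}}(1-e^{-sx})(\nu_1-\nu_0)(dx)$ would then vanish identically although it grows linearly in $s$. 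Moreover the perturbation is not admissible as written: a band-limited $\eta_n$ is real-analytic and so cannot vanish on $\R_-$, and $e^{ux}\eta_n$ is not even integrable, so $\nu_1$ is not a finite L\'evy measure on $\R_+$; the spatial cutoff you dismiss as ``a minor technical point'' destroys the exact cancellation, and you give no quantitative argument for how the resulting leakage propagates through the (iterated) recursion into $m_1-m_0$ and then into $p_1-p_0$. The paper sidesteps all of this: it does not try to deduce coincidence of Mellin transforms from coincidence of exponents, but constructs two models for which $M_0=M_1$ on two chosen vertical lines and for which $\phi_0(s)-\phi_1(s)=2\delta s\left(1+\F_K(\ii sh)\right)$ vanishes for $|\Imm s|\le 1/h$, via $\nu_0-\nu_1=2\delta K_h'$ with $\F_K\equiv-1$ on a strip; the recursion is then applied only once, exactly on those lines.

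Second, your reduction of the distance bound is flawed: $\mathrm{KL}(P_0,P_1)$ is not controlled by $\|p_1-p_0\|_{L^1}$ even with a density lower bound (one needs a squared, $\chi^2$-type quantity), and the ``uniform lower bound on $p_0$ over the essential support'' that you defer as the main obstacle is generally false for these models — for the Beta-type example in the paper the density vanishes at an endpoint of its support, and the support itself moves with $c$. This is precisely where the paper's proof does its real work: it assumes only a polynomial minorant $p_0(x)\gtrsim(1+x)^{-2m}$, bounds the $\chi^2$-distance by a polynomially weighted $L^2$-distance of the densities, converts that via the Mellin--Parseval identity of Lemma~\ref{lemmelin} into $\left|M_0-M_1\right|^2$ integrated over two vertical lines, and only then invokes (A2) to bound the tail $|v|>1/h$, choosing $h\asymp 1/\log n$ and $\delta=h^{r+2}$; combined with the tensorization bound $\chi^2_n\le e^{n\chi^2}-1$ this gives the bounded distance needed for the two-point lower bound. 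Without a correct mechanism for (i) making the Mellin transforms agree on the relevant lines and (ii) dividing by $p_0$, your argument does not go through.
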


\begin{proof}
We follow the general reduction scheme, which can be found in \cite{KorTsyb} and \cite{Tsyb}. Consider a class of L{\'e}vy processes \(\A\) that satisfies the assumptions (A1)-(A3). There exist two L{\'e}vy process \(\xi_{0}\) and \(\xi_{1}\) from \(\A\), having L{\'e}vy triplets \(\left(c_{0}, 0, \nu_{0}\right) \), \(\left(c_{1}, 0, \nu_{1} \right)\), Laplace exponents \(\phi_{0}\), \(\phi_{1}\), exponential functionals with densities \(p_{0}\), \(p_{1}\) and Mellin transforms \(M_{0}\), \(M_{1}\),  such that it holds simultaneously
\begin{enumerate}
\item 
the L{\'e}vy triplets are related by the following identities: 
\begin{eqnarray}
\label{nu0nu1}
	c_{0} - c_{1} = 2\delta, \qquad \nu_{0} (x)  -	\nu_{1} (x)  = 2 \delta  K'_{h} (x), 
\end{eqnarray}
where \(\delta>0\), \(K_{h} (x) = h^{-1} K \left( h^{-1} x \right)\) for any \(x \in \R\) and some  \(h>0\), and \(K \in L^{1}(\C)\) satisfy \(\F_{K}(z)=-1\) for \(z\) with \(\Ree(z) \in [-1,1]\) and polynomical decay  \(\left| F_{K}(z) \right| \lesssim \left| Re(z) \right|^{-\eta}\) as \(|z| \to \infty\).
\item  the density of one of the functionals, say  the first one,  decays at most polynomially, i.e., there exists \(m \in \N\) such that 
\[
	p_{0}(x) \gtrsim (1+x)^{-2m}, \qquad x \to +\infty.
\]
\item  \(M_{0}(s) \) and \(M_{1}(s) \) coincide on the lines \(s=u^{(k)}+\ii v, \: k=1,2,\) for  \(u^{(1)}=3/2\), \(u^{(2)}=m+3/2\),  and any \(v\). Moreover, the asymptotics of the Mellin transforms along these lines is given by (A2), i.e.,
\[ 
	|M_{j} (u^{(k)} + \ii v) |\asymp \exp\{-\gamma^{(k)} |v|\}, \qquad  \mbox{as}\quad \; v \to \infty,
\]
with some \(\gamma^{(k)}>0, \; k=1,2, \: j=0,1.\)

\end{enumerate}
Let the exponential functionals of these L{\'e}vy processes have distribution laws \(\PP_{0}\) and \(P_{1}\).
\[   
	\chi^{2}_{n}(1|0) := \chi^{2} ( \PP_{1}^{\otimes n} | \PP_{0} ^{\otimes n})
	 \leq \exp\left\{n \chi^{2} ( \PP_{1} | \PP_{0} )\right\} -1, 
\]
see Lemma~5.5 from \cite{BelReiss}. The aim is to show that there exist a constant \(C>0\) such that \(\chi^{2}_{n}(1|0) <C\); after that the desired result will immediately follow, see Part~2 and especially Theorem~2.2 from \cite{Tsyb}.

Our choice of the models leads to the following estimate of the chi-squared distance between \(\PP_{1}\) and \(\PP_{2}\): 
\begin{multline}
\chi^{2}( 1 | 0 ) :=\chi^{2} ( \PP_{1} | \PP_{0} )= \int_{\R_{+}} \frac{\left(  p_{1}(x) - p_{0} (x) \right)^{2}}{ p_{0}(x)} \\
\lesssim \int_{\R_{+}} (1 + x^{2m})  \left(  p_{1}(x) - p_{0} (x) \right)^{2} dx.
\end{multline}
By Lemma~\ref{lemmelin} we get that 
\begin{multline}
	\chi^{2}( 1 | 0 ) 
	\lesssim 
	\Delta(0) + \Delta(m),
	\\ \mbox{where} \quad 
	\Delta (\cdot):=
	\int_{-\infty}^{\infty} \left|
		M_{0} (\cdot+1/2+\ii v) - M_{1} (\cdot+1/2+\ii v) 
	\right|^{2} dv.
\end{multline}	 
Note that by \eqref{momenty} and our assumptions, 
\begin{eqnarray}
\label{M0M1}
	 M_{0} (s - 1) - M_{1} (s - 1) 
	  = 
	  \frac{
	  	\phi_{0}(s) - \phi_{1}(s)
	  }
	  {	
	  	s
	} M_{0}(s),
\end{eqnarray}
where  \(s=u^{(k)}+\ii v\), \(k=1,2\).  By our choice of the L{\'e}vy measures \eqref{nu0nu1} and the representation of the Laplace exponent \eqref{phis2}, we get 
\begin{eqnarray}
\nonumber
	\phi_{0} (s) - \phi_{1}(s) 
	&=&
	\left(	
	 	c_{0} - c_{1}
	\right) s
	+
	\int_{\R_{+}} \left(1-e^{-sx}\right) \nu_{0} (dx)\\
	\nonumber
	&& \hspace{3cm}
	-
	\int_{\R_{+}} \left(1-e^{-sx}\right) \nu_{1} (dx)	\\
	\nonumber
	&=& 2 \delta s +  \int _{\R_{+}} \left[
		\nu_{0} (dx) - \nu_{1}(dx)
	\right] 
	-
	\left[
		\F_{\nu_{0}} (\ii s) 
		-
		\F_{\nu_{1}} (\ii s) 		
 	\right]\\
	\nonumber
	&=& 
	2\delta s + 
	2\delta  
	\int_{\R_{+}} K'_{h} (x) dx
	- 
	2\delta	\F_{K'_{h}} (\ii s) 
\end{eqnarray}
Next, we take into account that \(\F_{K'_{h}} (y) = \ii y \F_{K_{h}}(y) = \ii y F_{K}(y h)\) for any \(y \in \C\). Therefore \(\int_{\R_{+}} K'_{h} (x) dx=\F_{K'_{h}} (0) =  0\) and moreover
\begin{eqnarray}
	\label{phi0phi1}
	\phi_{0} (s) - \phi_{1}(s)  = 2 \delta s \left( 1 + \F_{K} (\ii s h) \right).
\end{eqnarray}
Substituting \eqref{phi0phi1} into \eqref{M0M1}, we arrive at 
\begin{eqnarray*}
\label{M0M12}
	 M_{0} (s - 1) - M_{1} (s - 1)  = 2 \delta \left( 1 + \F_{K} (\ii s h) \right) M_{0}(s),
\end{eqnarray*}
and therefore 
\[
	\Delta(\cdot)
	= 
	\delta 
	\int_{\R} 
	\Bigl|
		1 
		+
		\F_{K} \Bigl(\left(-  v + \ii u^{(k)} \right) h\Bigr)
	\Bigr|^{2} * \Bigl|  M_{0}\Bigl(u^{(k)}+\ii v\Bigr)
	\Bigr|^{2} dv,
\]
where \(k=1\) if \(\cdot = 0\) and \(k=2\) if \(\cdot=m\). By our assumptions on the kernel \(K\), we get 
\[
	\Delta(\cdot) \lesssim \delta \int_{|v| > 1/h} e^{-\gamma^{(k)} |v|} dv =  \frac{\delta}{\gamma^{(k)}} e^{-\gamma^{(k)} /h}, 
\]
and therefore 
\[
	\chi^{2}( 1 | 0 ) 
	\lesssim 
	\frac{\delta}{\gamma^{*} }e^{-\gamma^{*}  /h}, \qquad \mbox{with} \quad \gamma^{*} := \min \left\{ 
		\gamma^{(1)}, \gamma^{(2)}
	\right\}.
\]	
If we choose \(\delta = h^{s+2}\) and \(h = \log^{-1}(n) \gamma^{*}  / (1+\eps)\) for any (small) \(\eps>0\), the \(\chi^{2}\) - divergence is bounded by 
\[
	\chi^{2}( 1 | 0 ) = \frac{\left(\gamma^{*}\right)^{s+2}}{(1+\eps)^{s+2}} \frac{\log^{-(s+2)}(n)}{n^{1+\eps}} \lesssim\frac{\log(C+1)}{n}
\]
for any \(C>0\) an \(n\) large enough. Therefore, 
\[
	\chi^{2}_{n} ( 1 | 0 )  \leq \exp\{n \chi^{2}(1|0)\} -1 \leq C, 
\]
and the statement of the theorem follows.

\end{proof}

\section{Appendix. Additional proofs}
\begin{lem}[Exponential inequalities for dependent sequences]
\label{EIB}
Let \( (G_k, \, k\geq 1) \) be  a sequence of centered real-valued random variables on the probability space $(\Omega,\cal F,P)$. Assume that
\begin{enumerate}
\item \(G_{k}\) is a strongly mixing sequence with the mixing coefficients satisfying
\begin{eqnarray}
\label{ALPHA_EXP_DECAY}
   \alpha_{G}(n)\leq \bar\alpha_{0}\exp\left\{- \balpha_{1} n\right\},\quad n\geq 1,\quad \bar\alpha_{0}>0,\quad \bar\alpha_{1}>0;
\end{eqnarray}
\item   $\sup_{k\geq 1}|G_k|\leq M$ a.s. for some positive \(M\);
\item the quantities 
\[
\rho _{k}:=\E\left[ G_{k}^{2}\:\left| 
2	\log G_{k} 
\right|^{2(1+\varepsilon )}\right], \quad k=1,2,\ldots,
\]%
 are finite for all \(k\) with some small \(\varepsilon>0\).
\end{enumerate} 
Then there is a positive constant \( C_{1} \) depending on \( \balpha:= \left( \balpha_{0}, \balpha_{1} \right) \) such that
$$
\PP\left\{ \sum_{k=1}^n G_k\geq \beta  \right\}\leq \exp\left[-\frac{C_{1}\beta^2 }{nv^{2}+M^{2} +M\beta \log^{2} n}\right].
$$
for all \( \beta>0 \) and \( n\geq 4, \)
where
\begin{eqnarray*}
    v^{2}\leq \sup_{k}\E[G_{k}^{2}]+C_{2}\sup_{k}\rho _{k}
\end{eqnarray*}
with \(C_{2}>0\).
\end{lem}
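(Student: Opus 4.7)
The plan is to follow the classical Bernstein-block strategy for strongly mixing sequences, in the spirit of Merlev\`ede--Peligrad--Rio. First, I would split $\{1,\ldots,n\}$ into alternating \emph{big} and \emph{small} blocks of sizes $p$ and $q$, grouping the $G_k$ accordingly. The contribution of the small-block sum can be controlled crudely using the uniform bound $|G_k|\leq M$, yielding an error of order $M(n/p)q$. The contribution of the big-block sum is the main term, and for it I would apply a coupling argument (Berbee's or Bradley's lemma) to construct an i.i.d.\ sequence of block-sums that differs from the original in probability by at most $(n/p)\alpha_{G}(q)$. The exponential mixing \eqref{ALPHA_EXP_DECAY} then lets us choose $q \asymp \log n$ so that this coupling cost is $O(n^{-r})$ for any prescribed $r$.

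Once the big blocks are effectively independent, bounded by $Mp$, and have variance bounded by $p v^2$ (using stationarity-type arguments on covariances plus the definition of $v^2$ from condition (3)), I would apply the standard Bernstein inequality for i.i.d.\ bounded random variables. This gives
\begin{equation*}
\PP\Biggl\{\sum_{\text{big blocks}} S_j \geq \beta \Biggr\} \leq \exp\Biggl(-\frac{C\,\beta^2}{(n/p)\cdot p v^2 + M p \beta}\Biggr) = \exp\Biggl(-\frac{C\,\beta^2}{n v^2 + M p \beta}\Biggr).
\end{equation*}
Choosing $p \asymp \log^2 n$ (and $q \asymp \log n$) to optimize the tradeoff between the coupling cost and the Bernstein term produces the denominator $n v^2 + M^2 + M\beta \log^2 n$ exactly as claimed, with the $M^2$ absorbing lower-order edge terms.

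The role of condition (3), the finiteness of $\rho_k = \E[G_k^2 |2\log G_k|^{2(1+\varepsilon)}]$, is to justify the bound on the block variance after truncating $G_k$ at a logarithmic scale: the weighted moment controls the contribution of the very small values of $|G_k|$ where the naive bound $\E[G_k^2]$ would be insufficient to collect the cross-covariance series arising from the mixing bound via a Rio-type covariance inequality. This is how one obtains $v^2 \lesssim \sup_k \E[G_k^2] + C_2 \sup_k \rho_k$.

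The main obstacle will be the coupling/block-variance step: one has to show that the variance of a big block of length $p$ is $\leq p v^2 (1+o(1))$ uniformly in the block, which requires summing the covariances $|\mathrm{Cov}(G_j, G_k)|$ under the exponential mixing condition and controlling them via $\rho_k$ rather than just $\E[G_k^2]$. Once this variance bound and the coupling are in place, the rest is essentially Bernstein plus the optimization of $p,q$ against $\log n$, which slots cleanly into the stated denominator.
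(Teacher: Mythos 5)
The paper itself does not prove this lemma: it is imported verbatim from Theorem A.1 and Corollary A.2 of \cite{panov2013a}, which in turn rest on the Bernstein-type inequality of Merlev\`ede, Peligrad and Rio for geometrically strongly mixing sequences. Your blocking-plus-coupling sketch is the natural first attempt, but it cannot deliver the inequality in the stated form, for two concrete reasons. First, any coupling step introduces an \emph{additive} error of order $(n/p)\alpha_G(q)$ in the probability bound (and note Berbee's lemma actually requires $\beta$-mixing; under strong mixing you only have Bradley-type couplings, which are weaker still). The claimed bound is a single exponential with denominator $nv^2+M^2+M\beta\log^2 n$; for $\beta$ ranging up to the trivial threshold $nM$ this exponential can be as small as $\exp\{-cn/\log^2 n\}$, which is far below the $O(n^{-r})$ coupling cost you get with $q\asymp\log n$, and no admissible re-choice of $q$ (which must remain much smaller than the big-block length $p$) makes the coupling cost absorbable uniformly in $\beta$ while still producing the factor $\log^2 n$ rather than something $\beta$-dependent in the denominator. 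This is exactly why Merlev\`ede--Peligrad--Rio do not use a plain big/small block decomposition with coupling, but a Cantor-like recursive block construction together with direct estimates of the log-Laplace transform; the clean $\log^{2}n$ factor comes out of that construction, not out of optimizing $p$ and $q$ in a standard Bernstein bound. Second, your treatment of the small blocks fails: bounding their total contribution deterministically by $M(n/p)q\asymp Mn/\log n$ is useless in the regime where the lemma is applied in this paper ($\beta\asymp\sqrt{n\log n}$, coming from $\beta=n\lambda$ with $\lambda=\zeta\sqrt{(\log n)/n}$); the small blocks require the same exponential treatment as the big ones.

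The variance part of your plan is in the right spirit: under the exponential decay \eqref{ALPHA_EXP_DECAY}, Rio's covariance inequality written through quantile functions converts the covariance series into a logarithmically weighted second moment, which is precisely what the quantities $\rho_k$ in condition (3) control, giving $v^2\lesssim\sup_k\E[G_k^2]+C\sup_k\rho_k$. But carried out as described, your argument would yield at best a bound of the form $\exp\{-c\beta^2/(nv^2+Mp\beta)\}+(n/p)\alpha_G(q)$, i.e.\ a strictly weaker statement than the lemma. The honest options are either to cite the Merlev\`ede--Peligrad--Rio inequality, as the paper does through \cite{panov2013a}, or to reproduce their Cantor-set/Laplace-transform argument in full.
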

\begin{proof}
The proof directly follows from Theorem A.1 and Corollary A.2 from \cite{panov2013a}.
\end{proof}

The next result gives the uniform probabilistic inequality for the empirical process. This result is an analogue of Proposition A.3 from  \cite{panov2013a}, which gives the uniform inequality for the case when \(u=0\) (see below). For similar results in i.i.d. case, see \cite{NeuReiss}.

\begin{prop}
\label{ExpBounds}
Let \( Z_{j},\)  \(j=1,\ldots, n, \) be a stationary sequence of   random variables.  Define
\begin{eqnarray*}
	\p_{n}(v)&:=&\frac{1}{n}\sum_{j=1}^{n}\exp\left\{ \left( u +\ii v \right) Z_{j}\right\},
\end{eqnarray*}
where \(u \in \R_{+}\) is fixed and \(v \in \R\) varies. Let \(\p(v)\) be a characteristic function of the corresponding stationary distribution.  Let also $w$ be a positive monotone decreasing  Lipschitz function on $\mathbb{R}_{+}$ such that
\begin{equation}
\label{decreasing_w}
	0<w(z)\leq  \frac{1}{\sqrt{\log(e+|z|)}}, \quad z\in \mathbb{R}.
\end{equation}
Suppose that  the following assumptions hold:
\begin{description}
\item[(A1)] random variables \(e^{Z_{j}} \)  possess finite absolute moments of order \( p>2\).
\item[(A2)] \(Z_{j}\) is a strongly mixing sequence with the mixing coefficients satisfying
\begin{eqnarray}
\label{ALPHA_EXP_DECAY2}
   \alpha_{Z}(n)\leq \bar\alpha_{0}\exp\left\{- \balpha_{1} n\right\},\quad n\geq 1,\quad \bar\alpha_{0}>0,\quad \bar\alpha_{1}>0.
\end{eqnarray}
\end{description}
 Then  there are \( \delta'>0 \) and \( \zeta_{0}>0  \), such that
the inequality
\begin{eqnarray}
\label{MINEQ}
\PP\left\{\sqrt{\frac{n}{\log n}}
\left\|\varphi_{n}- \varphi   \right\|_{L_{\infty}(\mathbb{R},w)}>\zeta \right\}&\leq & B \zeta^{-p}n^{-1-\delta' }.
\end{eqnarray}
holds for any \( \zeta >\zeta_{0}  \) and  some positive constant \( B \) not depending on \( \zeta  \) and \(n.\)
\end{prop}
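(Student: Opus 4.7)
The plan is to establish \eqref{MINEQ} by extending the argument behind Proposition~A.3 of \cite{panov2013a}, which treats the special case $u=0$ where $\bigl|\exp(\ii v Z_{j})\bigr| = 1$, to the present situation with $u>0$ for which the summands $\exp((u+\ii v)Z_{j})$ are no longer bounded. The overall structure mirrors the $u=0$ case: obtain a pointwise exponential concentration bound for $\varphi_{n}(v)-\varphi(v)$ via Lemma~\ref{EIB}, then pass to a uniform-in-$v$ statement by discretization, exploiting the decay of the weight $w$ to absorb the logarithmic factor incurred by the union bound.

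The first step is truncation. Fix a level $M_{n}$ of polynomial order in $n$ (to be tuned) and decompose each summand $\exp((u+\ii v)Z_{j})$ into a body part multiplied by $\mathbf{1}\{e^{uZ_{j}}\leq M_{n}\}$ and a tail part. Under (A1), Markov's inequality gives $\E\bigl[e^{uZ_{1}}\mathbf{1}\{e^{uZ_{1}}>M_{n}\}\bigr]\leq \E[e^{pZ_{1}}]\,M_{n}^{1-p}$, so that the tail contribution to $\|\varphi_{n}-\varphi\|_{L_{\infty}(\R,w)}$ is controlled both in expectation and, via a union bound on the $n$ summands, with the required probability. Tuning $M_{n}^{p-1}\asymp \zeta^{p-1}\sqrt{n/\log n}$ translates the $p$-th moment into precisely the factor $\zeta^{-p}$ appearing in \eqref{MINEQ}.

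The second step applies Lemma~\ref{EIB} pointwise in $v$ to the truncated centered real and imaginary parts
\[
G_{j}^{(v)} := \Ree\!\bigl[e^{(u+\ii v)Z_{j}}\mathbf{1}\{e^{uZ_{j}}\leq M_{n}\}\bigr] - \E\bigl[\cdot\bigr],
\]
and analogously with $\Imm$ in place of $\Ree$. By construction $\sup_{j}|G_{j}^{(v)}|\leq 2M_{n}$, the strong mixing of $Z_{j}$ transfers to these bounded functionals with the same exponential rate \eqref{ALPHA_EXP_DECAY2}, and the variance proxy in Lemma~\ref{EIB} is uniformly bounded in $v$ by $\E[e^{2uZ_{1}}]<\infty$ thanks to (A1) with $p>2$. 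The moment $\rho_{k}$ is finite by the same reason, since $|\log G_{j}^{(v)}|$ is dominated by $u|Z_{j}|+O(\log M_{n})$ on the truncation event and $\E[e^{2uZ_{1}}|Z_{1}|^{2(1+\eps)}]<\infty$. Plugging $\beta\asymp \zeta\sqrt{n\log n}/w(v)$ into the bound of Lemma~\ref{EIB} produces a pointwise exponential tail of strength at least $n^{-1-\delta'}$.

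The third step extends the pointwise bound to a uniform one in $v$ following the discretization recipe of \cite{panov2013a}: choose a grid $\{v_{k}\}_{k=1}^{K_{n}}$ on $[-T_{n},T_{n}]$ with $T_{n}$ and $K_{n}$ polynomial in $n$, apply the pointwise bound at each $v_{k}$ and take a union bound; between grid points, control the oscillation of $v\mapsto \varphi_{n}(v)$ using the truncated Lipschitz constant $M_{n}\log M_{n}$ together with the Lipschitz property of $w$; for $|v|>T_{n}$, use the weight decay $w(v)\leq 1/\sqrt{\log(e+|v|)}$ from \eqref{decreasing_w} combined with the trivial estimate $|\varphi_{n}(v)-\varphi(v)|\leq M_{n}+\E[e^{uZ_{1}}]$. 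The principal obstacle will be the simultaneous tuning of $M_{n}$, $T_{n}$ and the grid spacing so as to balance the three resulting error terms — the moment-based truncation error (which must carry the $\zeta^{-p}$), the Bernstein-type concentration of the truncated part (which must deliver the $\sqrt{\log n/n}$ rate in the weighted norm), and the discretization error — each of which must separately fit under the $B\zeta^{-p}n^{-1-\delta'}$ budget. The condition $p>2$ in (A1) and the precise shape \eqref{decreasing_w} of $w$ are exactly what make this triple balance simultaneously achievable.
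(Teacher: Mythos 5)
Your overall architecture (truncation, pointwise Bernstein-type bound via Lemma~\ref{EIB} on real and imaginary parts, then discretization weighted by $w$) is the same as the paper's, which splits the weighted empirical process into a truncated part $\mathcal{W}_n^1$ and a tail part $\mathcal{W}_n^2$ and treats them separately. However, your third step contains a genuine gap. You propose to discretize only on $[-T_n,T_n]$ with $T_n$ \emph{polynomial} in $n$ and to dispatch $|v|>T_n$ by the trivial estimate $|\varphi_n(v)-\varphi(v)|\leq M_n+\E[e^{uZ_1}]$ combined with the decay of $w$. This cannot work: the weight decays only logarithmically, $w(v)\leq 1/\sqrt{\log(e+|v|)}$, so at $|v|=T_n=n^{O(1)}$ one has $w(v)\gtrsim 1/\sqrt{\log n}$, and the weighted trivial bound is of order $M_n/\sqrt{\log n}$, which is enormously larger than the target $\zeta\sqrt{\log n/n}$. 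To make the trivial bound usable you would need $T_n$ super-exponential in $n$, and then the union bound over the $\sim T_n/\gamma$ grid points destroys the $n^{-1-\delta'}$ budget. The paper avoids any frequency cutoff: it covers the whole line by blocks $(A_{k-1},A_k]$ with $A_k=e^k$, applies Lemma~\ref{EIB} at the $\sim A_k/\gamma$ grid points of block $k$, and exploits that the exponent in the resulting bound contains $w^2(A_{k-1})\asymp 1/k$ in the denominator, so the per-point probability decays geometrically in $k$ fast enough (under $\theta B>1$) to beat the exponentially growing number of grid points; the series over $k$ then converges with no tail region left over. Some version of this ``weight-improves-the-exponent'' chaining is indispensable given \eqref{decreasing_w}.

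A second, smaller issue is the internal tuning of your truncation level. You set $M_n^{p-1}\asymp\zeta^{p-1}\sqrt{n/\log n}$, which indeed controls the truncation \emph{bias} $\E[e^{uZ}\mathbf{1}\{e^{uZ}>M_n\}]\leq M_n^{1-p}\E[e^{puZ}]$, but it does not support your claim that a union bound over the $n$ summands controls the tail part ``with the required probability'': with only $p>2$ moments, $\PP\{\max_{j\leq n}e^{uZ_j}>M_n\}\lesssim n M_n^{-p}$, and with your choice of $M_n$ this quantity is not $O(n^{-1-\delta'})$ (it even grows in $n$ for $p$ close to $2$). Forcing $M_n$ large enough for the union bound ($M_n\gtrsim n^{(2+\delta')/p}$) then feeds a polynomially large $M$ into the Bernstein denominator of Lemma~\ref{EIB}, and the balance you describe breaks down for moderate $p$. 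The paper handles the unbounded tail differently: the centered tail sum $\mathcal{W}_n^2$ is shown to be $o(\sqrt{\log n/n})$ by a moment bound on its expectation together with a Borel--Cantelli argument along dyadic blocks, rather than by a per-$n$ union bound. You should either adopt that route or choose $\Xi_n$ and argue the exceedance probability in a way that is consistent with the $p>2$ moment assumption.
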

\begin{proof}
Denote 
\begin{eqnarray*}
\mathcal{W}_{n}^{1}(v) &:=&
\frac{w(v) }{n} \;
	\sum_{j=1}^{n} 
	\Bigl(
			e^{(u+\ii v) Z_{j}} 
			I\left\{ e^{u Z_{j }} < \Xi_{n} \right\} 
			-		
			\E \left[	
				e^{(u+\ii v) Z} 
				I\left\{ e^{u Z} < \Xi_{n} \right\} 
			\right]
	\Bigr),\\
\mathcal{W}_{n}^{2}(v) &:=&
\frac{w(v) }{n} \;
	\sum_{j=1}^{n} 
	\Bigl(
			e^{(u+\ii v) Z_{j}} 
			I\left\{ e^{u Z_{j }} \geq \Xi_{n} \right\} 
			-		
			\E \left[	
				e^{(u+\ii v) Z} 
				I\left\{ e^{u Z} \geq \Xi_{n} \right\} 
			\right]
	\Bigr),
\end{eqnarray*}
where \(Z\) is a random variable with stationary distribution of \(Z_{j}\). The main idea of the proof is to show that 
\begin{eqnarray}
\label{aim1}
\PP\left\{
|\mathcal{W}_{n}^{1}(v)|>\zeta 
\sqrt{\frac{\log n}{n}}
\right\}&\leq & \B_{1} \zeta^{-p}n^{-1-\delta' },\\
\label{aim2}
\PP\left\{
|\mathcal{W}_{n}^{2}(v) |>\zeta 
\sqrt{\frac{\log n}{n}}
\right\} &\leq & \B_{2} \zeta^{-p}n^{-1-\delta' },
\end{eqnarray}
with \(\Xi_{n}=...\)  and some positive \(\B_{2}\) and \(\B_{2}\). 

\textbf{Step 1.}  The aim of the first step is to show \eqref{aim1}. The proof follows the same lines as the proof of Proposition A.3 from  \cite{panov2013a}.

\textbf{1.1.}  
Consider the sequence \( A_{k}=e^{k},\, k\in \mathbb{N} \) and cover each
interval \( [-A_{k},A_{k}] \) by
\( M_{k}=\left(\lfloor 2A_{k}/\gamma \rfloor +1  \right) \) disjoint small
intervals \( \Lambda_{k,1},\ldots,\Lambda_{k,M_{k}} \) of
the length \( \gamma. \) Let \( v_{k,1},\ldots, v_{k,M_{k}} \) be the centers
of these intervals. We have for any natural \( K>0 \)
\begin{multline*}
\max_{k=1,\ldots,K}\sup_{A_{k-1}<| v |\leq A_{k}}|\mathcal{W}_{n}^{1}(v)|\leq 
\max_{k=1,\ldots,K}\max_{1\leq m \leq M_{k}}
\sup_{v\in \Lambda_{k,m}}|\mathcal{W}_{n}^{1}(v)-\mathcal{W}_{n}^{1}(v_{k,m})|
\\
+\max_{k=1,\ldots,K}\max_{\Bigl\{\substack{ 1\leq m \leq M_{k}:\\
| v_{k,m} |>A_{k-1}}\Bigr\}}|\mathcal{W}_{n}^{1}(v_{k,m})|.
\end{multline*}
Hence for any positive \(\lambda\),
\begin{multline}
\label{DEC1}
\PP\left( \max_{k=1,\ldots,K}\sup_{A_{k-1}< | v |\leq A_{k}}|\mathcal{W}_{n}^{1}(v)|>\lambda \right)\\
\leq
\PP\left(\sup_{| v_{1}-v_{2} |<\gamma}|\mathcal{W}_{n}^{1}(v_{1})-\mathcal{W}_{n}^{1}(v_{2})|>\lambda/2\right)
\\
+
 \sum_{k=1}^{K}\sum_{\Bigl\{\substack{ 1\leq m \leq M_{k}:
| v_{k,m} |>A_{k-1}}\Bigr\}}\PP(|\mathcal{W}_{n}^{1}(v_{k,m})|>\lambda/2).
\end{multline}
The aim of the next two steps is to get the upper bounds for the summands in the right hand side, where
 \(\lambda\) is taken  in the form \( \lambda=\zeta \sqrt{(\log n) / n} \) with arbitrary large enough \(\zeta\).  
 
\textbf{1.2.}  We proceed with the first summand in \eqref{DEC1}. It holds for any \( v_{1},v_{2}\in \mathbb{R} \)
\begin{eqnarray}
\label{WNDIFF}
\nonumber
|\mathcal{W}_{n}^{1}(v_{1})-\mathcal{W}_{n}^{1}(v_{2})|&\leq&
\left|
	w( v_{1})-w( v_{2} )
\right|
\times
\max_{v}  
\left| 
	\frac{\mathcal{W}_{n}^{1} (v)}{w(v)}
\right|
\\ \nonumber && \cdot\hspace{2cm}
\left| 
	\frac{\mathcal{W}_{n}^{1} (v_{1})}{w(v_{1})}
	-
	\frac{\mathcal{W}_{n}^{1} (v_{2})}{w(v_{2})}
\right|
\times
\max_{v} \left[
	w(v)
\right]
\\
\nonumber
&\leq&
2 \: \Xi_{n}
\left|
	w( v_{1})-w( v_{2} )
\right|\\ \nonumber && \hspace{0.5cm}
+\frac{1}{n}\sum_{j=1}^{n}
\Bigl[
	\left| 
		e^{(u+\ii v_{1}) Z_{j}} - e^{(u+\ii v_{2}) Z_{j}}
	\right| I\left\{ e^{u Z_{j}} < \Xi_{n} \right\} 
\Bigr]\\\nonumber
&& \hspace{0.5cm}+
\Bigl|
\E \left[
		\left(
			 e^{(u+\ii v_{1}) Z} - e^{(u+\ii v_{2}) Z}
	 	\right)
	 I\left\{ 
	 		e^{u Z} < \Xi_{n} 
		\right\} 
	\right]
\Bigr|
\\
&\leq& \left| v_{1}-v_{2} \right|\:\Xi_{n}\:
\left[ 2 \:
L_{w}+\frac{1}{n}\sum_{j=1}^{n}| Z_{j}|+\E| Z | \right],
\end{eqnarray}
where \( L_{\omega } \) is the Lipschitz constant of \( w\) and \(Z\) is a random variable distributed by the stationary law of the sequence \(\left\{Z_{j}\right\}\). Next, the Markov inequality implies
\begin{eqnarray*}
\PP\left\{ \frac{1}{n}\sum_{j=1}^{n}\Bigl[|  Z_{j} |-\E| Z |\Bigr]>
c \right\}\leq c^{-p}n^{-p}\:\E\left| \sum_{j=1}^{n}\Bigl[| Z_{j} |-\E| Z | \Bigr] \right|^{p}
\end{eqnarray*}
for any \( c >0. \)
Using now Yokoyama   inequality \cite{Yoko} and taking into account the assumptions of the continuity of moments of \(Z_{j}\) and the assumption 1 from Lemma \ref{EIB},  we get
\begin{eqnarray*}
 \E\left| \sum_{j=1}^{n}\Bigl[
 	|  Z_{j} |-\E| Z |
\Bigr] \right|^{p}\leq C_{p}(\bar\alpha)n^{p/2},
\end{eqnarray*}
where \( C_{p}(\bar\alpha) \) is some constant depending on \(\bar\alpha=(\bar\alpha_{0},\bar\alpha_{1})\) and \( p\).
Returning to our choice of  \(\gamma\) and \(\lambda\), which in particularly yields that \( \gamma=\lambda / \zeta =\sqrt{(\log n) / n}, \) 
we obtain from \eqref{WNDIFF}
\begin{multline*}
\label{LINEQ}
\PP\Bigl\{
	\sup_{| v_{1}-v_{2}|<\gamma}|\mathcal{W}_{n}^{1}(v_{1})-\mathcal{W}_{n}^{1}(v_{2})|>\lambda/2\Bigr\}
	\leq
	\\
 \PP\left\{ 
 	\frac{1}{n}\sum_{j=1}^{n}
	\Bigl[
		|  Z_{j} |-\E| Z |
	\Bigr]>\frac{\zeta}{2 \Xi_{n}}-
 2 L_{w}-2\E|Z| 
\right\}\\
\leq 
B_{0} \: c_{p}(\balpha) \Bigl(
	\zeta/\left(2 \Xi_{n}\right)- 
 2 L_{w}-2\E|Z| 
\Bigr)^{-p} n^{-p/2}
\leq
B_{1}\zeta^{-p} \:\Xi_{n}^{p} \:n^{-p/2}
\end{multline*}
with some constants \( B_{0}, B_{1} \) not depending on \( \zeta \) and \( n, \) provided \(\zeta\) is large enough.

\textbf{1.3.}  
Now we turn to the second term on the right-hand side of  \eqref{DEC1}.
Applying  Lemma~\ref{EIB} with \(G_{k}=n \Ree\left[ \mathcal{W}_{n}^{1}(u_{k,m})\right]\) and \(\beta = n \lambda\), we get
\begin{eqnarray*}
\PP\left(|\Ree\left[ \mathcal{W}_{n}^{1}(v_{k,m}) \right]|>\lambda/4\right) \leq \K, 
\end{eqnarray*}
where 
\begin{eqnarray*}
\K :=  \exp\left(
-\frac{
	B_{3} \lambda^{2}n
}{
	B_{2} \Xi_{n}^{2} w^{2}(A_{k-1})\log^{2(1+\varepsilon) }(\Xi_{n} w(A_{k-1}))+\lambda\log^{2}(n) \Xi_{n}
w(A_{k-1})
}
\right)
\end{eqnarray*}
with some constants \( B_{2} \) and  \( B_{3} \) depending only on the characteristics
of the process \( Z \).  Similarly, applying the same result with \(G_{k}=n \Imm \left[ \mathcal{W}_{n}^{1}(u_{k,m})\right]\), we conclude that
\begin{eqnarray*}
\PP\left(|\Imm\left[ \mathcal{W}_{n}^{1}(v_{k,m}) \right]|>\lambda/4\right)
\leq \K,
\end{eqnarray*}
and therefore 
\begin{eqnarray*}
\sum_{\{ | v_{k,m} |>A_{k-1} \}}\PP(|\mathcal{W}_{n}^{1}(v_{k,m})|>\lambda/2)\leq
\left(\lfloor 2A_{k}/\gamma \rfloor +1  \right)\K.
\end{eqnarray*}
Set now \(\gamma= \sqrt{(\log n)/n}\) and \(\lambda = \zeta \sqrt{(\log n)/n}\) and note that under our choice of \(\Xi_{n}\), 
\[
	\Xi_{n}^{2} w^{2}(A_{k-1})\log^{2(1+\varepsilon) }(\Xi_{n} w(A_{k-1})) \gtrsim \lambda\log^{2}(n) \Xi_{n}
w(A_{k-1}).
\]
Therefore,
\begin{multline*}
\sum_{\{ | v_{k,m} |>A_{k-1} \}}\PP(|\mathcal{W}_{n}^{1}(v_{k,m})|>\lambda/2)
\\
\lesssim   A_{k}
\sqrt{\frac{n}{\log(n) }}
\exp\left(-\frac{B\zeta^{2}\log (n)}{w^{2}(A_{k-1})\Xi_{n}^{2}\log^{2(1+\varepsilon)}(w(A_{k-1}))}\right)  , \quad n\to \infty
\end{multline*}
with some constant \( B>0. \)
Fix \( \theta>0 \) such that \( B\theta>1 \)  and compute
\begin{multline*}
 \sum_{\{ | v_{k,m} |>A_{k-1} \}}\PP(|\mathcal{W}_{n}^{1}(v_{k,m})|>\lambda/2) \\ 
 \lesssim 
\sqrt{\frac{n}{\log(n) }}
\exp\Bigl\{k-\theta B(k-1) -B(k-1)(\zeta^{2}(\log n/\Xi_{n})-\theta)\Bigr\} 
 \\
\lesssim
 \sqrt{\frac{n}{\log(n) }}
 e^{k(1-\theta B)}  e^{-B(k-1)(\zeta^{2}(\log n/\Xi_{n})-\theta)}.
\end{multline*}
Since \( \zeta^{2}(\log n/\Xi_{n})>\theta \), we arrive at 
\begin{multline*}
 \sum_{k=2}^{K}\sum_{\{ |v_{k,m} |>A_{k-1} \}}\PP(|\mathcal{W}_{n}(v_{k,m})|>\lambda/2)
\\ \lesssim     
 \sqrt{\frac{n}{\log(n) }}
 e^{-B(\zeta^{2}(\log n/\Xi_{n})-\theta)}
 \Bigl[
	 \sum_{k=2}^{K} e^{k(1-\theta B)}
\Bigr]\\
  \lesssim 
\log^{-1/2} (n) \exp\Bigl\{-B\zeta^{2}(\log n/\Xi_{n}) + \log(n) \Bigr\}.
\end{multline*}

Taking large enough \( \zeta >0 \), we get \eqref{aim1}.

\textbf{Step 2}. Now we are concentrated on \eqref{aim2}. The idea of the proof given below was published in \cite{Belpreprint}, Proposition 7.4.

Consider the sequence
\[
R_{n}(v) := 
\frac{1 }{n} \;
	\sum_{j=1}^{n} 
			e^{(u+\ii v) Z_{j}} 
			I\left\{ e^{u Z_{j }} \geq \Xi_{n} \right\}. 
\]
By the Markov inequality we get 
\begin{multline*}
	\left|
		\E \left[
			R_{n}(u)
		\right]
	\right|
	\leq \E \left[
		e^{u Z_{j}}
	\right]
	\PP \left\{ e^{u Z_{j }} \geq \Xi_{n} \right\} 
	\\ \leq 
	\Xi_{n}^{-p}
	\;\;
	\E \left[
		e^{u Z_{j}}
	\right]
	\;
	\E
	\left[
		e^{u p Z_{j}}
	\right] =  o\Bigl(\sqrt{(\log n)/ n}\Bigr) 
\end{multline*}
Set \(\nu_{k}=2^{k}, k \in 1,2,...\), then it holds 
\begin{multline*}
\sum_{k=1}^{\infty}\PP\Bigl\{ \max_{j=1..\eta_{k+1}}   e^{u Z_{j}} \geq \Xi_{\eta_{k}}\Bigr\}
\leq
	\sum_{k=1}^{\infty} \eta_{k+1} \PP\{ e^{u Z} \geq \Xi_{\eta_{k}}\} \\
	\leq 
	\E e^{p u Z} 
	\sum_{k=1}^{\infty} \eta_{k+1} \Xi_{\eta_{k}}^{-p}<\infty.
\end{multline*}
By the Borel-Cantelli lemma,
\[
	\PP\Bigl\{ \max_{j=1..\eta_{k+1}}   e^{u Z_{j}} \geq \Xi_{\eta_{k}} \quad\mbox {for infinitely many k}\Bigr\} = 0.
\]
From here it follows that \(R_{n}(u)- \E R_{n}(u) = o\Bigl(\sqrt{(\log n)/ n}\Bigr)\). This completes the proof. 
\end{proof}
\begin{lem} 
\label{lemlem}
Let the measure \(\bar{\nu}\) be such that \(\|\bar{\nu}^{(r)}\|_{\infty} \leq C_{1}\)  for some positive \(C_{1}\), the weighting function \(w_{n}\) admits the property \(w_{n}=V_{n}^{-k} w(v/V_{n})\) for some \(k>0\) and function \(w\) satisfying 
\[
	\|
		\F_{w(u) /u^{r}} (\cdot)
	\|_{L_{1}} \leq C_{2}
\]
with some \(C_{2}>0\). Then 
\[
\Bigl|
	 \int_{0}^{\infty} w_{n}(v) 
 		\F_{\bar{\nu}} (v) dv
\Bigr| \lesssim V_{n}^{-(r+k)}, \qquad n \to \infty
.\]
\end{lem}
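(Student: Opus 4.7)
The plan is to exploit the smoothness of $\bar\nu$ (bounded $r$-th derivative) in the spirit of ``smoothness of a function $\Leftrightarrow$ decay of its Fourier transform'', and then use a Parseval/Plancherel pairing to exchange the integration against $w_n$ with an integration against $\bar\nu^{(r)}$ that can be estimated by $\|\bar\nu^{(r)}\|_\infty$.

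First I would introduce the auxiliary weight $\tilde w_n(v) := w_n(v)/v^r$, which is well-defined since $w_n$ is supported on $[\eps V_n, V_n]$ and hence bounded away from the origin. Rewriting
\[
   \int_0^\infty w_n(v)\,\F_{\bar\nu}(v)\,dv
   = \int_\R \tilde w_n(v)\,\bigl(v^r\F_{\bar\nu}(v)\bigr)\,dv,
\]
I would use the identity $v^r\F_{\bar\nu}(v) = (-\ii)^r\F_{\bar\nu^{(r)}}(v)$, valid under the smoothness assumption (A3), which follows by $r$-fold integration by parts. Parseval's formula then turns the integral into
\[
   \int_0^\infty w_n(v)\,\F_{\bar\nu}(v)\,dv
   = (-\ii)^r \int_\R \F_{\tilde w_n}(x)\,\bar\nu^{(r)}(x)\,dx,
\]
up to a fixed constant coming from the chosen Fourier normalization.

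Next I would compute the Fourier transform of $\tilde w_n$ using the assumed scaling. Since $w_n(v) = V_n^{-k} w(v/V_n)$, a change of variables gives $\tilde w_n(v) = V_n^{-k-r} g(v/V_n)$ with $g(u) := w(u)/u^r$. The dilation rule for the Fourier transform yields $\F_{\tilde w_n}(x) = V_n^{1-k-r}\F_g(V_n x)$. Plugging this in and using $\|\bar\nu^{(r)}\|_\infty \leq C_1$ together with a further change of variables $y = V_n x$,
\[
\Bigl|\int_\R \F_{\tilde w_n}(x)\,\bar\nu^{(r)}(x)\,dx\Bigr|
  \leq C_1 V_n^{1-k-r}\int_\R|\F_g(V_n x)|\,dx
  = C_1 V_n^{-k-r}\|\F_g\|_{L^1}
  \leq C_1 C_2 V_n^{-(r+k)},
\]
which is precisely the claimed bound.

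The main obstacle I expect is the justification of the identity $v^r\F_{\bar\nu}(v) = (-\ii)^r\F_{\bar\nu^{(r)}}(v)$ in the present generality: strictly speaking, (A3) only guarantees boundedness of the $r$-th derivative of the (density of the) measure $\bar\nu$, so one must also ensure enough integrability of $\bar\nu$ and its first $r-1$ derivatives at infinity for the integration-by-parts boundary terms to vanish. Since $\bar\nu(dx) = e^{-u^\circ x}\nu(dx)$ with $u^\circ>0$ and $\nu$ is a finite measure on $\R_+$ by (A1), the exponential factor provides the required decay, and the identity goes through. Once that technicality is in place, the remaining estimates are straightforward.
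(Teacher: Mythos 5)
Your proposal is correct and follows essentially the same route as the paper: the paper also shifts the factor $(\ii v)^{r}$ onto the weight, pairs $\bar\nu^{(r)}$ with the (inverse) Fourier transform of $w_{n}(\cdot)/(\ii\cdot)^{r}$ via Plancherel, and then uses the dilation scaling $w_{n}=V_{n}^{-k}w(\cdot/V_{n})$ together with $\|\F_{w(u)/u^{r}}\|_{L^{1}}\leq C_{2}$ and $\|\bar\nu^{(r)}\|_{\infty}\leq C_{1}$ to obtain the bound $V_{n}^{-(r+k)}$. Your extra remark on justifying the boundary terms in the integration by parts (using the exponential damping in $\bar\nu(dx)=e^{-u^{\circ}x}\nu(dx)$) is a sensible refinement of a step the paper leaves implicit.
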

\begin{proof}
 Following \cite{BelReiss}, we apply the Plancherel identity:
\begin{eqnarray*}
\label{cc}
\Bigl|
	 \int_{0}^{\infty} w_{n}(v) 
 		\F_{\bar{\nu}} (v) dv
\Bigr|  &=& 2 \pi \left| 
		\int_{\R} \bar\nu^{(r)} (x) 
		\overline{
		\F^{-1}_{w_{n}(\cdot)/(\ii \cdot)^{r}} (x) 
		}
		dx
	\right|\\
	&\leq&
	2 \pi \:
	V_{n}^{-(r+k)} \|\bar{\nu}^{(r)}\|_{\infty}\;
	\|
		\F_{w(u) /u^{r}} (\cdot)
	\|_{L_{1}} 	\lesssim
	V_{n}^{-(r+k)}.
\end{eqnarray*}
\end{proof}

\begin{lem}[analogue of the Parseval-Plancherel theorem for Mellin transform]
\label{lemmelin}
Let \(X_{1}\) and \(X_{0}\) be two L{\'e}vy process with expontional functionals that have densities \(p_{0}\) and \(p_{1}\), and Mellin transforms \(M_{0}\) and \(M_{1}\), resp. For any \(b \in \R\), it holds 
\begin{multline*}
	\int_{0}^{\infty} x^{b} \left(
		p_{0}(x) - p_{1}(x)
	\right)^{2}
	dx \\
	=
	(2 \pi)^{-1/2}
	\int_{-\infty}^{\infty} \left|
		M_{0} (b/2 +1/2+\ii v) - 		M_{1} (b/2 +1/2+\ii v) 
	\right|^{2}
	dv.
\end{multline*}
\end{lem}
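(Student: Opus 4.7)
The plan is to reduce the claim to the classical Parseval--Plancherel identity for the Fourier transform via the exponential substitution $x=e^{t}$, which is the standard device for converting Mellin-type identities into Fourier-type ones.

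First I would set $\sigma := (b+1)/2$ and introduce the function
\[
g(t) := e^{\sigma t}\bigl(p_{0}(e^{t}) - p_{1}(e^{t})\bigr), \qquad t \in \R.
\]
A direct change of variables $x=e^{t}$, $dx = e^{t}\,dt$, gives
\[
\int_{0}^{\infty} x^{b}\bigl(p_{0}(x)-p_{1}(x)\bigr)^{2}\,dx
= \int_{-\infty}^{\infty} e^{(2\sigma-1)t}\bigl(p_{0}(e^{t})-p_{1}(e^{t})\bigr)^{2} e^{t}\,dt
= \int_{-\infty}^{\infty} |g(t)|^{2}\,dt.
\]
On the Mellin side, the same change of variable yields
\[
M_{j}(\sigma+\ii v) = \int_{0}^{\infty} x^{\sigma+\ii v - 1} p_{j}(x)\,dx
= \int_{-\infty}^{\infty} e^{\ii v t}\, e^{\sigma t} p_{j}(e^{t})\,dt,
\]
so that $M_{0}(\sigma+\ii v)-M_{1}(\sigma+\ii v)$ is exactly the Fourier transform of $g$ evaluated at $v$.

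At this point the ordinary Plancherel identity for the Fourier transform gives
\[
\int_{-\infty}^{\infty} |g(t)|^{2}\,dt
= \frac{1}{2\pi}\int_{-\infty}^{\infty}\bigl|M_{0}(\sigma+\ii v)-M_{1}(\sigma+\ii v)\bigr|^{2}\,dv,
\]
which is the claimed identity after substituting $\sigma = b/2+1/2$.

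The only nontrivial verification is that $g \in L^{2}(\R)$, so that Plancherel applies. This reduces to checking $\int_{0}^{\infty} x^{b}\,p_{j}(x)^{2}\,dx < \infty$ for $j=0,1$, which is exactly the hypothesis (the claim is vacuous otherwise): we may assume the left-hand side of the identity is finite. If one wishes a cleaner argument, one can first establish the identity for a dense class of Schwartz-type densities on $(0,\infty)$ by the computation above and then extend by continuity of the Fourier-Plancherel isometry, the densities $p_{0},p_{1}$ being tacitly in the weighted $L^{2}$ space $L^{2}((0,\infty), x^{b}\,dx)$. This density/approximation step is the only mildly delicate point; the algebraic computation itself is routine.
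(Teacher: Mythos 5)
Your argument is correct, and it is the standard one: the paper itself states Lemma~\ref{lemmelin} in the Appendix without any proof, so there is nothing to compare against beyond noting that your substitution $x=e^{t}$, which turns the weighted $L^{2}$ integral into $\int_{\R}|g(t)|^{2}\,dt$ with $g(t)=e^{(b+1)t/2}\bigl(p_{0}(e^{t})-p_{1}(e^{t})\bigr)$ and identifies $M_{0}(b/2+1/2+\ii v)-M_{1}(b/2+1/2+\ii v)$ with the Fourier transform of $g$ at $v$, is exactly the intended reduction to the classical Plancherel identity. One discrepancy worth flagging: your computation yields the normalizing constant $(2\pi)^{-1}$, whereas the statement in the paper displays $(2\pi)^{-1/2}$; with the Mellin convention $M_{j}(s)=\int_{0}^{\infty}x^{s-1}p_{j}(x)\,dx$ and the Fourier convention you use, $(2\pi)^{-1}$ is the correct factor, so the paper's constant appears to be a misprint, and the difference is immaterial where the lemma is applied (in the proof of Theorem~\ref{lower} it is only used up to multiplicative constants in bounding the $\chi^{2}$-distance). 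Your closing remark on integrability is the right level of care: one either assumes the left-hand side finite (as is implicit in the application) or passes to the $L^{2}$-extension of the Fourier--Plancherel isometry by approximation, since $g$ need not lie in $L^{1}(\R)$.
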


\bibliographystyle{plain}
\bibliography{Panov_bibliography}

\end{document}